\documentclass[letterpaper,twocolumn,10pt]{article}
\usepackage{usenix-2020-09}

\usepackage[pdftex]{graphicx}
\usepackage[utf8]{inputenc}
\usepackage{colortbl}
\usepackage{xargs}  
\interdisplaylinepenalty=2500
\usepackage{afterpage}
\usepackage[tight,normalsize,sf,SF]{subfigure}
\usepackage{booktabs}
\usepackage{array}
\usepackage{paralist}
\usepackage{verbatim}
\usepackage{arydshln}
\usepackage{color}
\usepackage{float}
\usepackage{url}

\usepackage{tikz}
\usepackage{amsmath}
\usepackage{amsfonts}
\usepackage{amssymb}
\usepackage{amsmath}
\usepackage{amsthm}

\usepackage{lscape}
\usepackage{enumitem}
\usepackage{msc}
\usepackage{algorithm}
\usepackage[noend]{algpseudocode}

\newtheorem{definition}{Definition}
\newtheorem{theorem}{Theorem}

\newcommand{\commentOut}[1]{}
\algnewcommand{\LineComment}[1]{\State \(\triangleright\) #1}

\newcommand{\CV}{\mathcal{V}_{corr}}
\newcommand{\V}{\mathcal{V}}
\newcommand{\HV}{\Tilde{\mathcal{V}}}

\usepackage[disable]{todonotes} 
\newcommandx{\VTNote}[2][1=]{\todo[linecolor=green,backgroundcolor=green!25,bordercolor=green,#1]{VJT:#2}}
\newcommandx{\THNote}[2][1=]{\todo[linecolor=orange,backgroundcolor=orange!25,bordercolor=orange,#1]{TEH:#2}}
\newcommandx{\EMNote}[2][1=]{\todo[linecolor=blue,backgroundcolor=blue!25,bordercolor=blue,#1]{EEM:#2}}

\usepackage{bm}
\restylefloat{table}
\definecolor{green}{RGB}{144,238,144}
\definecolor{blue}{RGB}{135,206,250}
\definecolor{yellow}{RGB}{255,255,0}
\definecolor{orange}{RGB}{255,140,0}
\definecolor{white}{RGB}{255,255,255}
\definecolor{red}{RGB}{220,0,0}

\newcommand{\Vote}{\mathit{Vote}}
\newcommand{\Mac}{\mathit{MAC}}
\newcommand{\Wbb}{\mathit{WBB}}
\newcommand{\Param}{\mathit{Params}}
\newcommand{\Paper}{\mathit{Paper}}
\newcommand{\DecProof}{\mathsf{PrfDec}}
\newcommand{\ReceivedVote}{\mathit{ReceivedVote}}
\newcommand{\commentOutForESORICS}[1]{}
\newcommand{\commit}{\mathsf{Com}}
\newcommand{\PrfEnc}{\mathsf{PrfEnc}}
\newcommand{\PrfKnow}{\mathsf{PrfKnow}}
\newcommand{\rerand}{\mathsf{Rerand}}

\newcommand{\Result}{\mathsf{Result}}

\newcommand{\negl}{\mathsf{negl}}
\newcommand{\dlog}{\mathsf{dlog}}
\newcommand{\Mix}{\mathsf{Mix}}
\newcommand{\Dec}{\mathsf{Decrypt}}
\newcommand{\PEP}{\mathsf{PlaintextEquivalent}}
\newcommand{\vid}{\textit{VoterID}}
\newcommand{\receivedvid}{\textit{RecVoterID}}
\newcommand{\shorten}[1]{}

\usepackage[utf8]{inputenc} 

\newcommand{\ifanon}[2]{#2}

\newcommand{\ifcompressing}[2]{#1}


\usepackage{booktabs} 
\usepackage{array} 
\usepackage{paralist} 
\usepackage{verbatim} 
\usepackage{arydshln}

\begin{document}
\ifanon{
	\title{Towards Verifiable Remote Voting with Paper Assurance}
	\author{Anonymous}
}	
{	
	\title{Verifiable Remote Voting with Paper Assurance\thanks{\url{elem0@protonmail.com}, \url{xavier.boyen@qut.edu.au}, \url{chris@culnane.org}, \url{kristian.gjosteen@ntnu.no}, \url{thomas.haines@anu.edu.au}, \url{vanessa.teague@anu.edu.au}}}
\author{  
{\rm  Eleanor McMurtry}\\
Department of Computer Science\\
ETH Zurich
\and
{\rm Xavier Boyen}\\
QUT
\and
{\rm Chris Culnane}\\
Castellate 
\and
{\rm Kristian Gjøsteen}\\
NTNU
\and
{\rm Thomas Haines}\\
ANU
\and
{\rm Vanessa Teague}\\
Thinking Cybersecurity Pty. Ltd.\\ and ANU }

}


%
%
%

\maketitle

\begin{abstract}
We propose a protocol for verifiable remote voting with paper assurance. It is intended to augment existing postal voting procedures, allowing a ballot to be electronically constructed, printed on paper, then returned in the post. It allows each voter to verify that their vote has been correctly cast, recorded and tallied by the Electoral Commission. The system is not end-to-end verifiable, but does allow voters to detect manipulation by an adversary who controls either the voting device, or (the postal service and electoral commission) but not both.
The protocol is not receipt-free, but if the client honestly follows the protocol (including possibly remembering everything), they cannot subsequently prove how they voted.
Our proposal is the first to combine plain paper assurance with cryptographic verification in a (passively) receipt-free manner.
\end{abstract}
\vspace{-0.3cm}
\section{Introduction}
\vspace{-0.2cm}
The biggest form of remote voting is postal voting---nearly half the US presidential votes in 2020 were cast by mail~\cite{pew:votingExperience}. Postal voting suffers from the same coercion problems that all remote voting approaches do, but it provides easy cast-as-intended verification via the intuitive checking of the plaintext ballot. This simplicity greatly favours postal voting over online remote voting. Most online voting schemes require a level of trust in the device for integrity or involve complex and difficult cast-as-intended verification procedures. Failure to conduct the audits can have a catastrophic impact on integrity.  However, postal voting produces no evidence that the vote was accurately included and tallied---this is what we add.

In an ideal world remote voting in any form would be deployed only where absolutely necessary, for example to enfranchise a house-bound voter. Unfortunately, there is a global trend towards increases in remote voting, either by mail \cite{rallings20102010,VEC_PostalVoting_Position,commonsbriefing} or worse, online voting \cite{gjosteen2011norwegian, stenerud2012reality, heiberg2014verifiable}. 
The global coronavirus pandemic has accelerated the trend.

Postal voting has seen demonstrated instances of fraud in recent years \cite{UK_Court_Judgement, telegraph, TheAgeMoreland}; those that have been detected may be only a small indication of a larger problem.  There have been repeated problems with online voting systems\cite{halderman2015new, springall2014security,culnane2017trust, specter2020ballot, specter2021security, gaudry2020breaking}, including flaws in cryptographic verification mechanisms that have been demonstrated only after they had been relied upon in an election~\cite{haines2020not}. Resilience against cryptographic failure is another great advantage of a plain paper backup---the continued push towards paperless remote voting runs counter to both academic opinion and the established experiences of deployed systems. 
However, Electoral Commissions (ECs) are under pressure to deliver some form of remote voting. 


We propose a new remote voting system that combines electronic construction of the ballot with a familiar and easy-to-verify paper record. Using their own device, the voter generates a ballot, prints it, and returns it by post. This halves the use of the postal channel, thus increasing the time available for voters to construct and cast their ballots. The approach offers immediate plaintext cast-as-intended verification to the voter, with an option for verifying the vote was recorded and tallied properly if the voter conducts a simple electronic check that her vote is properly included on the bulletin board.

The system is not end-to-end verifiable, but it provides verifiability against an adversary who controls \textbf{either} the voter's device \textbf{or} (the postal system and the EC) but not both.  The verification protocols detect manipulation by an adversary who corrupts the EC and the post, as long as the client's randomness remains secret. Although it is not Receipt-Free, clients who execute the protocol honestly (including remembering their randomness) cannot prove how they voted.

Our proposal offers a new
choice of tradeoffs: easy cast-as-intended verification, defence against a cheating post and EC (unlike traditional postal voting) and some defence against coercion (though it does introduce coercion opportunities not present in traditional postal voting).
Whilst our recommendation is not to conduct remote voting unless absolutely necessary, this may be the best choice in some scenarios.



\subsection{Our contribution / protocol properties}
The scheme resembles existing end-to-end verifiable voting schemes, except that ``the system'' whose behaviour the voter needs to verify includes her device, the postal service, and the EC's vote-receiving process.  We use a web bulletin board (WBB) (such as \cite{culnane2014peered}), which is an authenticated broadcast channel with memory.  The voter verifies that her vote has been cast as she intended by reading it on a plain paper printout, which she puts in a post box along with other verification artifacts.  At the end of the election, the voter checks that her ID appears among the confirmed votes on the WBB.  
The proof of proper tallying is universally verifiable.

A great advantage of our scheme is that it falls back to postal voting integrity guarantees even if all the electronic devices are compromised.  If voters check their printouts, and the postal service can be trusted, and the processes for opening the envelopes and counting the ballots are properly observed, then the election outcome is correct.  We refer in the text to the places where scrutineers may watch the paper processing to gain the evidence they need to have trust in the paper-only tally, independent of any cryptography. The cryptographic protocol adds the option for these processes to be verified electronically by people who are not present at the counting location or do not trust the post.  Scrutineers are welcome, but not required, to do anything to support the security claims of the cryptographic protocol. 

Of course, the scheme also adds the possibility to fabricate problems, for example by submitting inconsistent values to make it appear that there was cheating when there was not---a correct result may look suspicious.  Accountability (and other defences against this) is a topic for future work.

\emph{Receipt Freeness}~\cite{benaloh1994receipt} means that the system does not allow voters to prove how they voted.  Our system offers a weaker version we call \emph{honest-but-remembering Receipt Freeness}---if a voter's device executes the protocol honestly, she cannot subsequently prove how she voted (assuming her mail isn't read and her channel to the EC is not tapped), even if the device remembers the randomness used to generate the ciphertexts. Thus it is strictly better than Helios (which does not claim to be Receipt-Free). She can, however, produce a receipt by actively deviating from the protocol.  For example, a voter who posts commitments given to her by the coercer can later prove to that coercer how she voted. In this sense our receipt freeness property is weaker than that proposed by Benaloh~\cite{benaloh1994receipt} and proven for some attendance polling-place systems~\cite{moran2006receipt}---see Section~\ref{subsec:RF}. Note this is also true for traditional postal voting e.g. if the voter films her voting process.
We do not claim this is sufficient for government elections. It is, however, better than the coercion-resistance properties of any remote end-to-end verifiable e-voting system.

Our proposal is the first remote voting system to combine plain paper assurance with cryptographic verification in an honest-but-remembering receipt-free manner.

Our protocol has the following security properties:

\begin{itemize}
	\item privacy from an adversary that does not collude with the voting client, post, or EC, given threshold trust for vote decryption and proper opening of paper ballots,\VTNote{Actually this is an aspriational privacy property - this is what I'd like, but at the moment the EC is trusted for privacy.}
	\item honest-but-remembering receipt-freeness against an adversary who sees the WBB but does not tap the voter-EC communication channels or collude with the EC, 
	\item easy cast-as-intended verifiability based on plain paper,
	\item recorded-as-intended verifiability secure against an attacker who controls (the post and the EC) or the voter's device, but not both.

\end{itemize}

This is the first proposal to include all four of these advantages. Correct tallying is universally verifiable. If verification is properly performed, the proof of integrity is significantly better than postal voting.

A complete prototype implementation, including voting, tallying, and verifying, is available at:\\
\ifanon{\url{https://www.dropbox.com/s/diige6wd69772a9/}}{\url{https://github.com/eleanor-em/papervote}}

\subsection{Protocol main idea}
The key innovation is in the recorded-as-cast step, which allows the voter to check that the vote she put in the post was correctly recorded on the WBB.  We use a Carter and Wegman \cite{carter1979universal} universal hash on the WBB to bind the vote without allowing people to prove how they voted.  Before voting, the voter's device randomly generates two secrets $a$ and $b$ in $\{1,\ldots,q-1\}$, where $q$ is a large prime known to all voters.  The device posts a Pedersen commitment~\cite{pedersen1991non} to these secrets on the WBB.
When a voter wishes to cast a ballot, her device computes a MAC as $\Mac=a\cdot\Vote+b \bmod q$ and sends the MAC and vote, in encrypted form, to the EC.  The EC re-randomises the encryptions (for receipt-freeness) and posts them on the WBB. 

The voter sends two pieces of paper to the EC by post.
\begin{description}
	\item[Paper 1] contains her plaintext vote and encrypted commitment openings (with a proof of knowledge). 
	\item[Paper 2] contains her plaintext $\vid$ in human- and machine-readable form.
\end{description} 
These do not have to be made by one device, and splitting the task could improve privacy---this is explored in Section~\ref{sec:extensions}. 

As well as printing her $\vid$ and sending it with her paper vote, the voter does whatever is usual for postal voting in her country, such as writing her name and address on an outer envelope or signing it.  This is used at the EC for checking eligibility and identity against the electoral roll.

The Carter-Wegman hash provides two useful properties:
\begin{itemize}
	\item Without knowing $a$ and $b$, an attacker cannot generate a valid alternative $(\Vote, \Mac)$ pair except with very small probability. 
	\item Even after $a$ and $b$ are exposed, a voter can plausibly claim to have cast any vote (if the system does not reveal $\Vote$ or $\Mac$ to the coercer)---she simply claims the correct $\Mac$ for whichever $\Vote$ the coercer demands.
\end{itemize}

The main idea of our verifiability proof is that anyone who intercepts the envelope, including a corrupt EC, cannot (except with small probability) change the MAC and the plaintext vote consistently, unless they can guess the values of $a$ and $b$ before the EC posts the vote and MAC on the WBB.  Using perfectly hiding commitments for $a$ and $b$ means that they are hidden unless the client exposes them---obviously this assumes the client keeps them secret. This protects the vote from manipulation by a corrupt postal service or EC, even if all the decryption authorities collude. So EC-re-randomization provides honest-but-remembering receipt freeness, while the Carter-Wegman hash prevents dishonest EC re-randomization. The homomorphic property of the encryption scheme is then used to recreate the right $\Mac$ from the paper vote received in the mail---if they match, the vote is accepted.

Thus verification depends on a secrecy assumption, but one in which the voter generates their own secret.  The protocol is not end-to-end verifiable, but it contrasts favourably with protocols such as code voting (described below) in which integrity depends on the secrecy of values that are generated centrally and sent to the voter.


\subsection{Related work on remote recorded-as-intended verification and receipt freeness} \label{subsec:relatedWork}
Neither the scientific literature nor the remote electronic systems used in practice have good solutions for cast-as-intended verification. They are either too hard for ordinary voters to use easily, or they are dependent on a secrecy assumption that is unverifiable and outside the voter's control. Ordinary postal voting has clear and simple cast-as-intended verification (assuming we take a vote to be ``cast'' when it is put in the mailbox) but no recorded-as-cast verification at all.

\VTNote{Add Demos-D and Mark Ryan's extension of JCJ for cast-as-intended verification?}

The Helios voting system~\cite{Adida08helios:web-based} offers end-to-end verifiability in a remote all-electronic setting.  A diligent voter gets very good evidence that her vote is cast as she intended and properly included, followed by a universally verifiable count.  \ifcompressing{}{Helios has been used successfully in university elections~\cite{adida2009electing} and by the International Association for Cryptologic Research. }  However, the verification is difficult enough that ordinary voters may be tricked into not performing it properly~\cite{karayumak2011usability}, and the recommended challenge strategies do not form Nash equilibria in a remote setting~\cite{culnane2016strategies}.  Even more importantly, Helios is not (and has never claimed to be) \emph{receipt free}: a voter can prove how she voted if her client remembers the randomness used to encrypt her vote. Thus Helios asumes low-coercion environments, and diligent voters who perform the cast-as-intended verification step on an independent device. The Estonian Internet voting system~\cite{springall2014security, heiberg2020planning} is similar.

The Civitas Internet voting system~\cite{clarkson2008civitas}, based on a protocol by Juels {\it et al.} \cite{juels2005coercion}, provides a very strong form of coercion resistance but no cast-as-intended verification. \ifcompressing{}{It is suitable for simple plurality and Condorcet-style voting systems and can be cleverly extended to Borda counting, but does not work for more complex systems such as Instant Runoff Voting (IRV) and Single Transferable Vote (STV).}
The Selene Internet voting system~\cite{ryan2016selene} can be used to enhance it with cast-as-intended verification~\cite{iovino2017using}, or as an adjunct to Helios-style systems.  In Selene, election trustees generate a unique tracker for each user, who uses it later to verify that their  (plaintext, electronic) vote was correctly included. 

In \emph{Code Voting} systems, the voter verifies cast-as-intended and recorded-as-cast in a single step, using a code sheet sent (usually) by paper mail. Remotegrity~\cite{zagorski2013remotegrity} is a remote version of the Scantegrity II voting system, with extra codes for confirming a properly-verified vote. In Pretty Good Democracy~\cite{ryan2009pretty} (PGD), there are codes for sending the vote and only one return code, to acknowledge receipt. 

In \emph{Code-return} voting systems, such as the Norwegian~\cite{gjosteen2011norwegian} and Swiss Internet voting systems, voters cast an encrypted vote and then receive a confirmation code, which they check against a code sheet received in the mail.   All these systems allow some subsequent verification of the tally.

Although code voting and code-return voting are convenient, they suffer from a major drawback: the \emph{integrity} of the outcome is dependent on the \emph{secrecy} of the codes.  In principle, secrecy is impossible to verify.  Printing the code sheets securely and sending them privately is the major practical problem in these schemes. Thus these systems are not really end-to-end verifiable.  (Ours is also not end-to-end verifiable, because verification is also dependent on a secrecy assumption, but one in which the voter generates their own secret.) 

Although Remotegrity and PGD have been adapted for instant runoff voting (also called ranked-choice voting), code voting becomes unwieldy as the number of preferences grows.

A plain paper mail step could also be added to the Internet voting solutions described above.  With Helios, this would result in properties incomparable to our scheme: genuine end-to-end verifiability, a simple cast-as-intended step for the paper backup, but no receipt freeness.  

It seems less useful to combine code-voting-style solutions with a plain paper return.  Firstly, it requires paper mail in both directions, thus removing much of the benefit of an electronic solution.  Second, it doesn't solve the problem that a malicious authority (or sufficient collusion among trustees) can fabricate a successful-looking verification.  This seems strictly worse than our solution, in which even a completely corrupted EC cannot cheat undetectably unless it also controls the client.

Pr\^et \`a voter~\cite{ryan2009pret} uses preprinted auditable ciphertexts which the voter selects or arranges to express their vote.  Although designed for pollsite voting, the idea could be extended to a remote setting, but auditing the printouts would be cumbersome.
Belenios VS~\cite{cortier2019beleniosvs} extends on this idea in a remote setting, allowing voters to receive their preprinted ciphertexts by mail and verify them with a device that is assumed to be independent of their voting client.  It also incorporates eligibility verifiability and receipt freeness. Its main disadvantage is that all its security properties depend on non-collusion between the registration server and the voting server\ifcompressing{}{, whereas our protocol is verifiable even if the server-side is completely corrupted}.  However, Belenios VS preserves privacy even against a corrupted voting client. (Our system can be expanded to do so\ifcompressing{}{at the cost of some extra complexity}---see Section~\ref{sec:extensions}.)

Several existing designs combine plain paper ballots with cryptographic verification for pollsite voting~\cite{rosen2011wombat,bell2013star,chaum2008scantegrity,benaloh:electionguard}.  
These are not designed for a remote setting---the aim of this paper is to take that design philosophy to remote voting.

Also note that the usability of even the simplest cast-as-intended verification mechanisms is questionable, with practical failure rates shown for both code voting~\cite{kulyk2020towards} and the checking of plaintext printouts~\cite{bernhard2020can, demillo2018voters, kortum2020voter,everett2007usability}.

\shorten{
Other internet voting systems with interesting combinations of properties are {\it caveat coercitor}~\cite{grewal2013caveat}, which allows voters 
to use a repeat vote as a signal that they have been coerced, and Du-vote~\cite{grewal2015vote, kremer2016or} which depends on a partially-trusted hardware module.}

In summary, no existing solution provides both receipt-freeness and easily-usable recorded-as-intended verification in a remote setting,
while protecting integrity against a fully corrupt authority.
Our contribution is to fill this gap. 
Compared with code-return systems, our scheme has two important differences: the code secrecy assumption is on the client, not the electoral authorities, and the system easily accommodates arbitrary ballots. 
Table~\ref{tab:comparison} compares the properties of our proposal to other schemes used or proposed.

Verifiable postal voting~\cite{benaloh2013verifiable} is closest to our setting.  This proposal improves on that work by achieving
 honest-but-remembering receipt freeness, and having
a much higher probability of detecting manipulation.

Complex voting schemes raise extra challenges for election privacy and verification.  Aditya {\it et al.} \cite{aditya2003secure, aditya2004efficient} first examined cryptographic election verification for instant runoff elections.  We reuse their idea of getting an authority to re-randomize votes before publication in order to achieve Receipt Freeness, but in our scheme the authority does not need to be trusted to re-randomize honestly, as long as the client keeps its secrets.

\begin{figure*}
	\begin{Tabular}{lllllllll}
		{\bf Protocol} & {\bf Cast-as-intended} & {\bf Recorded-as-cast} & {\bf Verifiable vs} & {\bf Receipt } & {\bf Paper}  & {\bf Complex}\\
		& {\bf verification} & {\bf verification} & {\bf fully corrupt EC} & {\bf Freeness} & {\bf directions} &  {\bf ballots} \\
		Postal voting & \color{blue} hand-marking & \color{red} No & \color{red} No & \color{blue} Yes & \color{red}2 & \color{blue} Yes \\
		Code Voting & \color{orange} EC code secrecy & \color{orange} EC code secrecy & \color{red} No & sometimes & \color{green} 1 & \color{red} No   \\
		Code-return & \color{orange} EC code secrecy & \color{orange} EC code secrecy & \color{red} No & sometimes & \color{green} 1 & \color{red} No   \\
		Code-return + paper & \color{blue} printout reading & \color{orange} EC code secrecy & \color{red} No & sometimes & \color{red} 2  & \color{red} No  \\
		Helios/Estonia & \color{green} Independent device & \color{green} Independent device & \color{green} Yes (Helios) & \color{red} No & \color{blue} 0 & \color{red}No (*)\\
		Helios + paper & \color{blue} printout reading & \color{green} Independent device & \color{green} Yes & \color{red} No & \color{green} 1  & \color{red} No \\
		Belenios-VS & \color{green} Independent device & \color{green} Independent device & \color{red} No & \color{blue} Yes & \color{green} 1 & \color{red} No \\
		
		Our system & \color{blue} Printout reading & \color{orange} Device code secrecy & \color{green} Yes & \color{green} Honest-but- & \color{green} 1 & \color{blue}Yes \\ 
		
		& & & & \color{green} remembering & & \\
	\end{Tabular} \label{tab:comparison}
	\caption{Comparison with existing systems. 
		There are no good solutions, but our system offers a combination of modest trust assumptions, simple verification and honest-but-remembering receipt freeness that is not otherwise available. (*) Note Helios and the Estonian system could be easily adapted to complex ballots if they adopted a mixnet instead of homomorphic tallying.}
\end{figure*}

\subsection{Structure of this Paper}
Cryptographic tools are described in the next section, followed by the protocol  in Section~\ref{sec:protocol}, including algorithms for casting, receiving, and tallying votes. Sketched security arguments are given in Section~\ref{sec:securityProofs}, with formal proofs in the Appendix.  
A prototype implementation and small trial are discussed in Section~\ref{sec:implementation}.
In Section~\ref{sec:extensions} we discuss some simple extensions, while limitations and further work are described in Section~\ref{sec:limitations}.

\section{Cryptographic Background} \label{sec:crypto}

Let $T$ be a set of election trustees. We use the following:
\paragraph{ElGamal encryption scheme}
The encryption scheme has parameters $\mathbb{G} = (G, g, q)$ where $G$ is a cyclic group of prime order $q$ generated by $g$.\footnote{The particular group chosen does not matter as long as the decisional Diffie-Hellman (DDH) problem is hard in $G$.} A secret key $sk$ is jointly generated amongst the trustees $T$ using $k$-out-of-$n$ Pedersen key generation~\cite{pedersen1991threshold} as an extension of Shamir secret sharing~\cite{shamir1979share}, with corresponding public key $pk$.
The scheme encrypts a message $m$ by setting $e = (g^r, m\cdot(pk)^r)$ for a blinding factor $r$. We will sometimes use $m = g^v$ to encrypt some value $v$---this allows for homomorphic addition of encrypted data by elementwise multiplication of ciphertexts.

We write $\{m\}_{pk}$ for an ElGamal encryption of message $m$ with public key, optionally producing (as defined in~\cite{bernhard2012not}):
	\begin{description}
		\item[$\PrfEnc_{\mathbb{G}, pk}(m, c)$:] an adaptively secure noninteractive zero knowledge proof (ZKP) that $c$ is an encryption of $m$;
		\item[$\PrfKnow_{\mathbb{G}, pk}(e=\{m\}_{pk})$:] an adaptively secure noninteractive ZKP of knowledge of the message $m$.
	\end{description}
We write $\{m_1, m_2\}_{pk}$ to mean multiple encryptions of different plaintexts $m_1$ and $m_2$.
We also sometimes abuse notation and write $\PrfKnow(\mathcal{S})$ for a vector of ciphertexts $\mathcal{S}$ to mean a vector of proofs of knowledge, one for each element in $\mathcal{S}$.

Threshold decryption of a ciphertext $\{m\}_{pk}$ (as defined in~\cite{cramer1997secure})  produces the plaintext $m$ and a universally-verifiable proof of proper decryption $\DecProof_{\mathbb{G}, pk}(e,m)$, which is a vector of at least $t$ adaptively secure NIZKs of equality of discrete logarithms.
We write $\Dec_{\mathbb{G}, pk}(\{m\})$ to mean a pair of a plaintext and its decryption proof.

ElGamal ciphertexts may be \emph{re-randomised}, written $\rerand_{\mathbb{G}, pk}((e_1,e_2))$ which means generating a random $r$ s.t. $1\leq r\leq q$ and setting $\rerand_{\mathbb{G}, pk}((e_1,e_2)) = (e_1 g^{r},e_2 pk^{r})$. This produces a new encryption of the same ciphertext.

Finally, we use \emph{plaintext equivalence proofs} (PEPs), a universally-verifiable ZKP that two ciphertexts encrypt the same message, as defined in~\cite{cryptoeprint:2020:909} based on~\cite{jakobsson2000mix}.
\paragraph{Pedersen commitment}
The commitment scheme~\cite{pedersen1991non} has parameters $\mathbb{P} = (G, h_1, h_2)$ where $G$ is a cyclic group in which discrete logarithms are hard, and $h_1, h_2$ are generators chosen such that nobody knows $\dlog_{h_1} h_2$ (as described in~\cite{kerry2013fips}). We write $\commit_{\mathbb{P}}(a;r_a)$ to mean a Pedersen commitment to the value $a$ using randomness $r_a$ (i.e. the value $h_1^a\cdot h_2^{r_a}$).

\paragraph{Mixing}
We write $\Mix_{\mathbb{G}, pk}(\mathcal{S})$ to mean a universally-verifiable distributed mix of the vector $\mathcal{S}$ (with the associated ZKPs), as in~\cite{verificatum2018,haines2019description}. Each trustee performs one stage of the mix, so that as long as at least one trustee is honest, the resulting mix is private (the link between inputs and outputs is unknown). \textbf{Note:} If $\mathcal{S}$ contains plaintexts, they will be encrypted before the mix process.

\VTNote{This and the para below can probably be removed, or rather subsumed into more precise def's above.}

\paragraph{Web bulletin board}
We model the WBB as a public broadcast channel with memory.
That is, items cannot be removed from the bulletin board once they are published, and every participant's final view of the WBB is identical~\cite{culnane2014peered,hirschi2020fixing} . In practice this means that we need the voter to have access to the WBB via a channel independent  from the Client Device.
We assume that it is available both during and after the election period, implying that a malicious authority cannot block uploads (though a malicious client might fail to upload).

\section{The Protocol} \label{sec:protocol}
Assume a list of $\vid$s constructed so that
\begin{itemize}
	\item each voter can recognise their own $\vid$, and
	\item no two voters have the same $\vid$.
\end{itemize}

The latter assumption is important for preventing clash attacks~\cite{kusters2012clash}, in which two voters are convinced that the same entry on the bulletin board is theirs.
\VTNote{Align these assumptions with later comments on whether (a) everyone should be able to verify that all VoterIDs correspond to eligible voters; (b) nobody should be able to guess a valid voter ID for someone else (to prevent ballot stuffing).  Neither of these is part of our core security claims, but both should be discussed, prob at the end.}

The WBB rows are indexed by $\vid$s---each voter updates their own row.
In practice, a good system would include an authentication mechanism to prevent voters writing to others' rows---we do not detail that here because it is separate from verifiability. $\vid$s could be a simple function of the voter's name and address. If a  voter uses someone else's $\vid$ we assume it can be detected (by the owner of that ID). \ifcompressing{}{See Section~\ref{sec:extensions} for more options.} For eligibility verifiability, we need to assume that the public has some way of assessing whether a VoterID corresponds to an eligible voter.  


Recall the election secret key $sk$ is shared amongst the set of electoral trustees $T$. This is important for voter privacy: we will assume that $n - k + 1$ of the trustees are honest so that no $k$ of them collude to decrypt data they are not supposed to. \ifcompressing{}{This way, a lone cheating trustee cannot identify how anybody voted.}

Setup is shown in Algorithm~\ref{fig:setup}. The electoral commission prepares a set of encrypted $\vid$s, and places them inside a smaller envelope. The smaller envelope is placed in a larger envelope marked with the corresponding plaintext.
The double-envelope is called $D_\vid$. It can be generated in advance or on demand, and will be used during vote receiving to link voters to votes in a private manner. Note that the mechanism that generates these double-envelopes is the \emph{only} part of the EC that is trusted for privacy.

\begin{algorithm} 	\caption{\textit{Setup:} System setup protocol}
	\begin{algorithmic}[1]
\LineComment{The following are posted to the WBB:}
\State $(\vid_i)$: a list of IDs of eligible voters.  We assume that these are assigned one-on-one to each voter.

\State $\mathbb{G} \gets (G, g, q)$: the public parameters of an ElGamal encryption scheme as discussed in Section~\ref{sec:crypto}.
\State $pk$: an ElGamal public key generated jointly among the trustees with the ElGamal parameters $\mathbb{G}$ and security parameter $\lambda$. The corresponding secret key $sk$ is shared among the trustees. This may be done e.g. following~\cite{pedersen1991threshold}.
\State	$\mathbb{P} \gets (P, h_1, h_2)$: the public parameters of a Pedersen commitment scheme  as discussed in Section~\ref{sec:crypto}.
\LineComment{The EC creates the following double envelopes:}
\State Outer envelope of $D_{\vid_i} \gets \vid$
\State Inner envelope of $D_{\vid_i} \gets e_{\vid_i} = \{\vid_i\}_{pk}$
   
\end{algorithmic}
	\label{fig:setup}

\end{algorithm}

\subsection{Voting} \label{subsec:voting}
The voter's experience is extremely straightforward and is detailed in Algorithm~\ref{fig:genAndCasting} (\textit{Cast}). To generate a ballot, the voter's device chooses two random secrets $a, b,\in G$, and randomness $r_a, r_b$, then publishes commitments $c_a = \commit_\mathbb{P}(a,r_a)$ and $c_b = \commit_\mathbb{P}(b,r_b)$ on the WBB (Steps~\ref{step:cast:secrets}--\ref{Step:VoterCommit}). When the voter chooses a vote, the device computes a message authentication code $\Mac = a\cdot\Vote+b$,\footnote{To encode votes as an integer, the EC could e.g. publish an ordered list of choices indexed from 0 to represent candidates or more complex preferences.} and sends encryptions of the vote and MAC to the EC, which re-randomises the values and posts them on the WBB indexed by the $\vid$ (Steps~\ref{step:cast:choosevote}--\ref{Step:ECPostsVoteMAC}). After the EC has posted these values, the device prints two pieces of paper (Steps~\ref{step:cast:wait}--\ref{step:cast:papers}):

\begin{description}
	\item[$\Paper_1$] contains the human-readable plaintext vote, encryptions of $a,b,r_a,r_b$ (i.e. the commitment openings) and proofs of plaintext knowledge of those ciphertexts.
	\item[$\Paper_2$] contains the human-readable plaintext $\vid$. 
\end{description}

Example printouts from our prototype are in Figures~\ref{fig:paper1} and~\ref{fig:paper2}. They use QR codes for the non-human-readable values.

The voter must mail both to the EC (using the standard postal voting procedures in her country, which may include signing the envelope) by placing $\Paper_1$ in a smaller envelope, and both the smaller envelope and $\Paper_2$ in the larger envelope. This mirrors standard postal voting practices.

The voter must check that $\Paper_1$ contains a correct human-readable printout of her vote, and $\Paper_2$ contains a correct human-readable printout of her $\vid$. If she wants to check that her vote has not been dropped, she needs to visit the WBB after the election to check that her $\vid$ is in the list of included IDs.  Note that she does \emph{not} have to do anything to verify that the QR codes on her printouts are not maliciously generated---this will be detected by subsequent verification (assuming the attacker model given in the Introduction).
\
\begin{figure}
	\centering
	\includegraphics[scale=0.4, clip, trim=0 300 0 0]{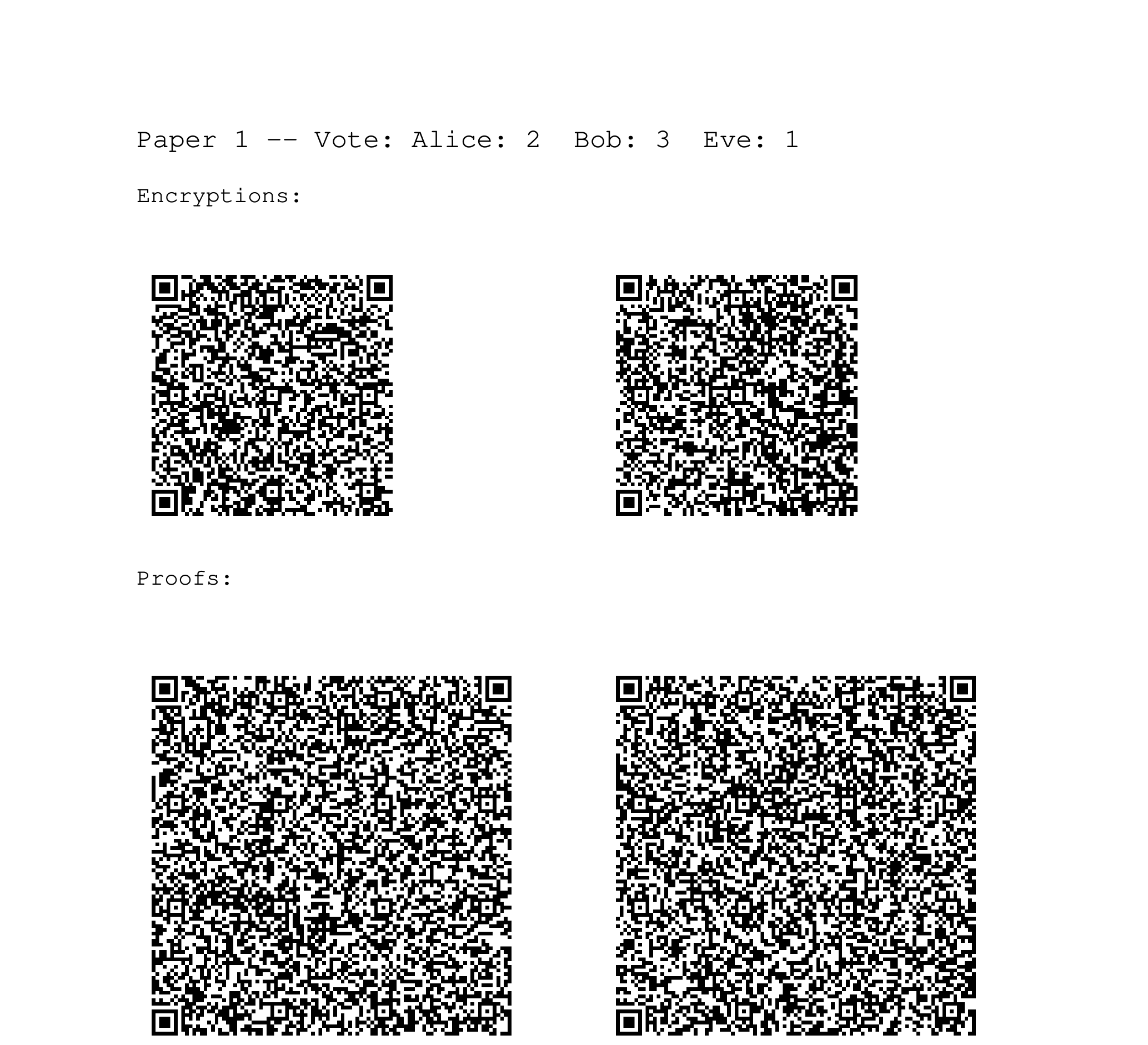}
	\caption{Paper 1: {\bf The voter only needs to check the plaintext vote at the top.}  This example is a ranking: Eve first, Alice next, Bob last. Each encryption QR code contains two ciphertexts, overall including $\{a, b, r_a, r_b\}_{pk}$. The proof QR code has the corresponding proofs of plaintext knowledge.}
	\label{fig:paper1}
\end{figure}

\begin{figure}
	\includegraphics[scale=0.4]{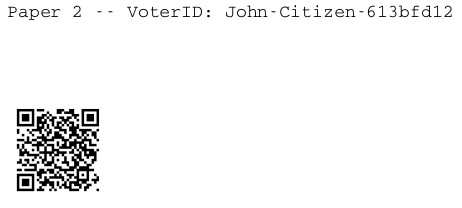}
	\caption{Paper 2: The voter just checks the plaintext $\vid$. 
	}
		\label{fig:paper2}
\end{figure}

\VTNote{Two nitpicks: 1. we should define [1,q-1].  2: I think it should be [1,q].}
\VTNote{Consider whether we should accept multiple registrations.  I think no?
If not, fix below.}
\begin{algorithm} 	\caption{\textit{Cast:} Vote generation and casting protocol 
	}
	\begin{algorithmic}[1]
	\State $Device$: $a,b,r_a,r_b\leftarrow \{1,\ldots,q-1\}$ (uniformly at random)\label{step:cast:secrets}
	\State $Device$: $c_a\leftarrow\commit_{\mathbb{P}}(a;r_a), c_b\leftarrow\commit_{\mathbb{P}}(b;r_b)$\label{step:cast:commitments}
	\State $Device\rightarrow \Wbb$: $\mathcal{B}^{\text{registered}}_\vid := (c_a, c_b)$ \label{Step:VoterCommit}
	\State $Voter\rightarrow Device$: $\Vote$\label{step:cast:choosevote}
	\State $Device$: $\Mac\leftarrow a\cdot \Vote+b\bmod q$
	\State $Device$:  Checks WBB for $c_a, c_b$. Sets $e_\Mac \gets \{g^\Mac\}_{pk}$, $e_\Vote, \gets \{g^\Vote\}_{pk}$, $e_\Param \gets \{a, b, r_a, r_b\}_{pk}.$
	\State $Device\rightarrow EC$: \label{step:voteSend}
	\State $\vid, e_\Mac, e_\Vote, \PrfKnow_{\mathbb{G}, pk}(e_\Mac), \PrfKnow_{\mathbb{G}, pk}(e_\Vote)$
	\State $EC\rightarrow \Wbb$:  $\mathcal{B}^{commit}_\vid := \big(\rerand(e_\Mac), \rerand(e_\Vote)\big)$\label{Step:ECPostsVoteMAC}
	\State $Device$: Checks WBB for $\rerand(e_\Mac), \rerand(e_\Vote)$\label{step:cast:wait}
	\State $Device\rightarrow \Paper_1$: $\Vote, e_\Param, \PrfKnow_{\mathbb{G}, pk}(e_\Param)$
	\State $Device\rightarrow \Paper_2$: $\vid$\label{step:cast:papers}
	\State $Voter\rightarrow EC$: $\Paper_1, \Paper_2$\ (by paper mail; $\Paper_1$ inside inner envelope)
	\end{algorithmic}
\label{fig:genAndCasting}
\end{algorithm}

\subsection{Receiving votes}
When the EC receives votes, they must be handled carefully to maintain voter privacy, so that the voter's identity is checked and forgotten before their vote is revealed. This seems complicated but is not much different from the common double-envelope system for protecting postal vote privacy, except for the matching of encrypted and plaintext VoterIDs.



The protocol (\textit{Process vote}) is shown in Algorithm~\ref{fig:tallyingEachBallot}. Scrutineers may be present to observe the plaintext-VoterID and plain-paper vote acceptance. The (outer) envelope is opened, and \textbf{only} $\Paper_2$ is removed. The received $\vid$ is checked against the identification on the (outer) envelope (e.g. the voter's name, address or signature); if this step fails, the entire ballot is placed in a reject pile (which scrutineers may see) and $\receivedvid$ is added to the WBB list $\mathcal{B}^{\text{rejected}}$. 

\begin{algorithm} 	\caption{\textit{Process vote:} Vote receiving protocol 
	}
	\begin{algorithmic}[1]
		\LineComment{Run for each ballot $(\Paper_1,\Paper_2)$ received by mail}
		\LineComment{Primed variables $\vid'$ are used to indicate the EC may receive different values to those the voter sent}
		\State $\Paper_2 \to EC:\ \vid'$
		\State EC: $\text{Checks }\vid'\text{ matches electoral roll}$
		\State EC: Retrieves $D_{\vid'}$ using $\vid'$
		\State EC: Joins $e_{\vid}$  (inner envelope of $D_{\vid'}$) to $\Paper_1$
		\State EC: $\text{Destroy }\Paper_2$,
		\State EC: Shuffle batches of $\Paper_1$ attached to $ e_{\vid'}$. 
		\State $\Paper_{1} \rightarrow EC:\ \ \Vote', e_{\vid'},e_\Param',\PrfKnow_{\mathbb{G}, pk}(e_\Param')$
		\State EC: $\text{Verifies }\PrfKnow_{\mathbb{G}, pk}(e_\Param')$.
		\label{Step:VerifyPfKnow}
		On failure, post $\vid'$ to $\mathcal{B}^{\text{rejected}}$ and skip ballot.
		\State $\textit{EC} \rightarrow \Wbb :$ 
		\State Add $\big(\Vote', e_{\vid}, \rerand(e_\Param')\big)$ to $\mathcal{B}^{\text{received}}$    \label{Step:ECPostsPaperInfo}
	\end{algorithmic}
	\label{fig:tallyingEachBallot}
\end{algorithm}

Next, the EC retrieves, or generates, the corresponding prepared envelope $D_\vid$.
The inner envelope containing the corresponding ciphertext $e_\vid$ is attached to $\Paper_1$  (e.g. by stapling, or by inserting into the unmarked inner envelope) \textbf{without looking at $\Paper_1$}. Finally, $\Paper_2$ is destroyed, leaving no plaintext link from $\vid$ to vote.


The resulting $(\Paper_1, e_\vid)$ pairs are shuffled physically to remove any link to the order in which envelopes were opened. Next they are opened. For each $\Paper_1$, the proofs of knowledge are verified, again putting the ballot in a reject pile and adding $e_\vid$ to $\mathcal{B}^{\text{rejected}}$ if it fails.
(These are shuffled and decrypted after opening all ballots.)
Finally, $(\Paper_1, e_\vid)$ pairs with verified proofs have their contents re-randomised and posted to the WBB list $\mathcal{B}^{\text{received}}$.

\subsection{Tallying} \label{subsec:tallying}


Tallying is shown in Algorithm~\ref{fig:tallyingVotes}, performed jointly by all trustees.  First we decrypt the secrets and check they are a correct opening of the commitments. We  then construct a second MAC from the committed vote on the WBB, and check that it matches the committed MAC.  If so, then with high probability the vote was cast as the voter intended, assuming that her secrets $a, b$ were not exposed. If any verifications fail, the trustees should mark the $\vid$ and go to the next vote.


\begin{algorithm} \caption{\textit{Tally votes:} Vote tallying protocol for trustees $T$}
	\begin{algorithmic}[1]
				\LineComment{Mix $\mathcal{B}^\text{received}$ to produce re-randomised encryptions of each item, permuted consistently, indicated by $\overline{e}_\Vote$ etc.}
		\State $T\rightarrow \Wbb: \ \mathcal{B}^{\text{received'}}=\Mix_{\mathbb{G}, pk}(\mathcal{B}^{\text{received}})$\label{Step:MixCast}
		\LineComment{Decrypt}
		\For {$\mathcal{B}^{\text{received'}}_i=(\overline{e}_\Vote, \overline{e}_\vid, \overline{e}_\Param)$}
		\State $T: \big((a, b, r_a, r_b), \DecProof_1\big) \gets \Dec_{\mathbb{G}, sk}(\overline{e}_\Param)$\label{Step:DecryptParams}
		\State $T: (\receivedvid, \DecProof_2) \gets \Dec_{\mathbb{G}, sk}(\overline{e}_\vid)$\label{Step:DecryptID}
		\State $T\rightarrow \Wbb$:
		\State $\mathcal{B}^{\text{mixed}}_i=\big(\overline{e}_\Vote, (a,b, r_a, r_b), \receivedvid, \DecProof_1, \DecProof_2\big)$\label{Step:RecVoterIDToWBB}
		\EndFor
		\LineComment{Join by matching $\vid$ to $\receivedvid$}  \label{Step:FirstWBBMsgs}
		\For {$\mathcal{B}^{\text{mixed}}_i=\big(\overline{e}_\Vote, (a,b, r_a, r_b), \receivedvid,\ldots\big)$ s.t. $\receivedvid$ is unique and doesn't appear in $\mathcal{B}^\text{rejected}$}
		\If {$\mathcal{B}^{registered}_\receivedvid$ is empty}
		\State skip to the next iteration.
		\EndIf
		\State $\Wbb\rightarrow T: (c_a, c_b) = \mathcal{B}^{\text{registered}}_\receivedvid$
		\If {$c_a=\commit_{\mathbb{P}}(a;r_a)$ and $c_b=\commit_{\mathbb{P}}(a;r_b)$}
		\State {$\mathcal{B}^{mixed}_i$  is a correct opening for $\mathcal{B}^{registered}_\receivedvid$}\label{Step:CheckOpen}
		\EndIf
		\EndFor 
		\LineComment{For each $\vid$ with one correct opening, recreate the MAC to check it matches the EC's committed one.}
		\For{ all \vid} 
		\If{$\mathcal{B}^{registered}_\vid$ or $\mathcal{B}^{commit}_\vid$ is empty}
	  	\State skip to the next iteration
	  	\EndIf
	  		\If {$\mathcal{B}^{registered}_\vid$ has a unique correct opening $\mathcal{B}^{mixed}_i$}
		\label{Step:uniqueCommitOpen}
		\State $\Wbb\rightarrow T:\ (e_\Vote, e_\Mac) = \mathcal{B}^{\text{commit}}_\vid, $\par
		\State $T\rightarrow \Wbb: \PEP_{\mathbb{G}, pk}\big(\overline{e}_\Vote, e_\Vote)$\label{Step:ECVotePEP}
		\State $\overline{e}_\Mac \gets (e_\Vote)^a + \{g^b\}_{pk}$
		\State $T\rightarrow \Wbb: \PEP_{\mathbb{G}, pk}(\overline{e}_\Mac, e_\Mac)$\label{Step:ECMACPEP}
		\If{plaintext equivalence proofs pass}
		\State $T\rightarrow \Wbb:$ Add $\left(\vid, e_\Vote\right)$ to $\mathcal{B}^{\text{accepted}}$\label{Step:acceptedVotes}
		\EndIf
		\EndIf
		\EndFor
		\LineComment{Mix and decrypt to produce final tally. $\hat{e}_\Vote$ represents the mix re-randomisation; we ignore $\hat{e}_\vid$.}
		
		\State $T\rightarrow \Wbb: \mathcal{B}^{accepted'}=\Mix_{\mathbb{G}, pk}(\mathcal{B}^{\text{accepted}})$\label{Step:MixAccepted}
		\For {$\mathcal{B}^{\text{accepted'}}_i=\hat{e}_\Vote$ on $\Wbb$}
		\State $T\rightarrow \Wbb: \mathcal{B}^{\text{tally}}_i=\Dec_{\mathbb{G}, pk}(\hat{e}_\Vote)$\label{Step:DecryptVote}
		\EndFor
	\end{algorithmic}
	\label{fig:tallyingVotes}
\end{algorithm}

\subsection{Verification protocols}
We define the vote to be \emph{cast} when the voter puts the envelope in the mail (or in a box at the EC), \emph{recorded} when the EC posts it (encrypted) on the bulletin board, and \emph{counted} when the list of decrypted votes is published on the WBB.\footnote{We omit questions of proper counting of complex ballots---when the list of accepted votes is public, we assume someone counts them properly.}

Each voter must check the printed paper vote, then use a device to verify that her ID appears in the final mix. Scrutineers verify the plain-paper aspects of the election. The WBB transcript is publicly verifiable.
The procedure for the WBB transcript verification is  in Algorithm~\ref{fig:globalVerification} (\textit{GlobalVerify}).

\subsubsection{By election scrutineers}
Scrutineers may observe the process of receiving paper ballots. This is not relevant to the security properties proven in this paper,\VTNote{Right? I can't see anywhere this is used in any of the security proofs.} but is relevant to the claim that the system falls back to traditional postal vote assumptions if the cryptography is broken, or if clients and EC are both dishonest.

Scrutineers present when the envelopes are opened must:\VTNote{I'm not sure that 2 and 3 add much.  We might be better off simply saying that they do whatever ID checking is traditional in their system, e.g. eligibility of the person on the outer envelope, or valid $\vid$ if that's equivalent.}
\begin{enumerate}
	\item verify the received $\vid$ is on the electoral roll and has not already had a vote included;
	\item verify the vote posted to the WBB (in step 9) matches the vote on $\Paper_1$.
\end{enumerate}


\subsubsection{By the voter}
The system is easy for a voter to verify. Before sending her vote by mail, the voter checks that the printed ballot paper matches the vote she intends to cast, and that the plaintext $\vid$ on Paper~1 is correct. 

\shorten{
The voter should either carry out the WBB transcript verification in Algorithm~\ref{fig:globalVerification}, or check that an organisation she trusts has done this verification.}

Once the receiving process is complete, she should check that her voter ID appears on the accepted list. 
The precise voter-verification protocol is shown in Algorithm~\ref{fig:voterVerification} (\textit{VoterVerify}). The voter must check the paper vote herself while casting it, but the WBB check can be outsourced to anyone.  Indeed, the voter only needs to check so that she can detect the non-arrival of her paper (or interference by an adversary).

\begin{algorithm} 	\caption{\textit{VoterVerify:} Voter's recorded-as-intended verification protocol}
	\begin{algorithmic}[1]
		\LineComment{The voter checks their printouts to verify}
		\State that the vote on $\Paper_1$ matches their intended vote, and		
		\State that the $\vid$ on $\Paper_2$ is correct.
		\LineComment{At the end of the election, the voter checks that their $\vid$ appears in $\mathcal{B}^{\text{accepted}}$ on the WBB.}
	\end{algorithmic}
	\label{fig:voterVerification}
\end{algorithm}

\subsubsection{Public Tally verification}
This consists of verifying the proofs that the tally protocol has been properly conducted. It is described in Algorithm~\ref{fig:globalVerification}.

\begin{algorithm}	\caption{\textit{GlobalVerify:}} Verification protocol for facts asserted on the WBB
	\begin{algorithmic}[1]
		\State Verify the mix proof for $\mathcal{B}^{\text{received}'}$ in Step~\ref{Step:MixCast} of \textit{Tally} \label{Step:VerFirstMix}
		\State Verify the decryption proofs in Steps \ref{Step:DecryptParams} and \ref{Step:DecryptID} of \textit{Tally} \label{Step:VerFirstDecrypt}
		\State Verify all PET proofs in Steps \ref{Step:ECVotePEP} and \ref{Step:ECMACPEP} of \textit{Tally} \label{Step:VerPETs}
		\State Verify the mix proof for $\mathcal{B}^{\text{accepted}'}$ in Step~\ref{Step:MixAccepted} of \textit{Tally} \label{Step:VerFinalMix}
		\State Verify the decryption proofs in Step~\ref{Step:DecryptVote} of \textit{Tally}	\label{Step:VerFinalDecrypt}
		\For {each row of $\mathcal{B}^{\text{registered}}$} \State Verify that $\vid$ is unique in $\mathcal{B}^{\text{registered}}$
		\EndFor
		\For {each row of $\mathcal{B}^{\text{commit}}$} \State Verify that $\vid$ is unique in $\mathcal{B}^{\text{commit}}$
		\EndFor
		\For {each row of $\mathcal{B}^{\text{accepted}}$}\label{Step:VerFacts}
			\State Verify that $\vid$ is unique in $\mathcal{B}^{\text{mixed}}$ and does not appear in $\mathcal{B}^{\text{rejected}}$ \label{Step:VerifyRecVoterIDToWBB}
			\State Verify that exactly one opening in $\mathcal{B}^{\text{registered}}$ is a correct opening for $c_a, c_b$
			\State Verify that the PETs in Steps~\ref{Step:ECVotePEP} and \ref{Step:ECMACPEP} of \textit{Tally} pass
		\EndFor
	\end{algorithmic}
	\label{fig:globalVerification}
\end{algorithm}

\subsection{Interpretation of the outcome}\label{Sec:Interpretation}
We need to be precise about the election outcome, since it depends on a combination of paper and electronic votes.  At the end of the election, the WBB transcript contains four sets:\footnote{For each of these lists one voter ID may have many corresponding WBB posts, but the list is a set so we don't count the same voter ID multiple times.}
\begin{enumerate}
	\item registered voter IDs $\mathcal{L}^{\text{registered}}$ drawn from $\mathcal{B}^{\text{registered}}$ , {\it i.e.} those who have uploaded a VoterID and commitments,
	\item received voter IDs $\mathcal{L}^{\text{received}}$ drawn from the plaintext ballots $\mathcal{B}^{\text{received}}$ posted by the EC in Step~\ref{Step:ECPostsPaperInfo} of \textit{Process Vote}, and decrypted in Step~\ref{Step:RecVoterIDToWBB} of \textit{Tally votes}.
	\item rejected voter IDs $\mathcal{L}^{\text{rejected}}$ drawn from the ballots $\mathcal{B}^{\text{rejected}}$ that arrived with invalid proofs, and
	\item 
	accepted voter IDs $\mathcal{L}^{\text{tally}}$ of those ballots $\mathcal{B}^{\text{tally}}$ that uniquely matched a registered voter's commitments, posted to the WBB in Step~\ref{Step:acceptedVotes} of \textit{Tally votes}.
\end{enumerate}

If $\mathcal{L}^{\text{tally}}$ is not a subset of $\mathcal{L}^{\text{registered}} \cup \mathcal{L}^{\text{received}}$, then something has gone badly 
wrong (and verification should fail). But in the normal course of an election we expect some deviation: some voters will register but never vote, some votes will go astray in the mail, or some votes will be misrecorded on arrival.  We want to devise a reasonable definition of an acceptable election outcome that can detect fraud but not cause the election to fail if small deviations are observed.  

The paper record consists of all ballots that passed traditional paper acceptance. It includes the votes in $\mathcal{B}^{\text{rejected}}$. 

The votes corresponding to the IDs in $\mathcal{L}^{\text{tally}}$ are those for which everything worked out perfectly---they should be accepted.  Their deviation from the plaintext paper ballots is an indication of one type of problem: possible substitution of paper ballots in the mail or by the EC.  Another type of problem is voters who registered but did not have a unique commitment match at Step~\ref{Step:uniqueCommitOpen} of \textit{Tally}, or did not pass PEPs in Step~\ref{Step:ECMACPEP} or Step~\ref{Step:ECVotePEP} of \textit{Tally}.  Depending on the exact nature of the problem, this could be evidence of attempted fraud \ifcompressing{}{(e.g. multiple commitment openings could indicate malware on the voter's device)} or a legitimate decision to register but not vote.  In summary, $\mathcal{L}^{\text{tally}}$ provides an arguable election outcome, while $\mathcal{L}^{\text{registered}}$, $\mathcal{L}^{\text{received}}$, and $\mathcal{L}^{\text{rejected}}$ provide some indication as to the extent of errors or manipulation attempts. Call the amount of detected error $\varepsilon=|\mathcal{L}^{\text{registered}} \cup \mathcal{L}^{\text{received}}|-|\mathcal{L}^{\text{tally}}|$.

Each democracy would have to decide how to deal with inconsistent results or evidence of problems.  Let $\mathcal{O}$ be the outcome of the election \emph{according to the paper record} (e.g. a tally of votes made for each candidate) with margin $M$  (e.g. half the difference in vote counts between the top two candidates). For a given WBB transcript $\tau$, define the acceptable number of caught errors to be $d$. One obvious formula would be: accept $\mathcal{O}$ if the demonstrated error in received votes was below the margin, (i.e. $d = M$). Another could be: accept $\mathcal{O}$ if the demonstrated error in received votes was below the margin, ignoring voters who registered but for whom a vote was not received (i.e. $d=M + |\mathcal{L}^{\text{registered}}| - |\mathcal{L}^{\text{received}}|$).

We abstract these choices out by defining the result to be:

\begin{equation*}
\Result(\tau, \mathcal{O}) =
\begin{cases}
\mathcal{O}, & \text{if $\varepsilon < d$ and }\textit{GlobalVerify}(\tau)\text{ passes} \\
\bot & \text{otherwise,} \\
&\text{ where } \varepsilon=|\mathcal{L}^{\text{registered}} \cup \mathcal{L}^{\text{received}}|-|\mathcal{L}^{\text{tally}}| \\
& \text{ and $d$ is determined by policy.} 

\end{cases}
\end{equation*}

To be confident that there were at most $d$ errors\footnote{In~\cite{kiayias2015end}, $d$ represents \textit{undetected} errors since the verification procedure is probabilistic. In our protocol we should be able to detect every error, so the interpretation is slightly different.}, at least $\theta=|\mathcal{V}|-(M-d)$ voters must correctly verify their votes (where $\mathcal{V}$ is the set of voters). Thus there is an inverse relationship between the allowed deviation and the number of voters that are allowed to not perform verification.
We assume 
\begin{itemize}
	\item the voter's receipt (later referred to as $\alpha_l$ for voter $\mathcal{V}_l$) consists of their $\vid$,
	\item there is some (out of scope) way for voters to check that their $\vid$ is unique, for example, it could be their name and address (which is bad for privacy, but ensures that it doesn't clash with someone else's),
	\item there is some (out of scope) way for observers to check that all registered voters are eligible.
\end{itemize}

We do not attempt to defend against denial-of-service attacks: votes can be scratched from the tally, {\it e.g.} if an adversary knows the target's $\vid$. However, this will be detected.
Preventions of this are a topic for future work.

\subsection{Sketch of security arguments} 
We provide below an English outline of the arguments we will use to prove security and privacy properties.

A client colluding with the EC or post can cheat, because then the adversary knows $a,b$, so it can change the paper ballot and generate a fake $(\Mac, \Vote)$ pair. 
In this threat model our system is no better than plain-paper postal voting.

Clearly a cheating EC can write a bad MAC, Vote or $\vid$ onto the WBB, rather than re-randomisations of what it received.  This will force the MAC match to fail and the electronic vote to be excluded.  A cheating client can do similarly. Each of these cheating individually will be detected.

We prove that if either the EC or the client is honest, then the vote cannot be substituted undetectably. Informally, suppose the client is honest, then consider what value the corrupt EC posts as that voter's re-randomised MAC and vote in Step~\ref{Step:ECPostsVoteMAC} of Cast.  If it is a valid MAC for the encrypted vote, then the EC must know $a,b$ or has been very lucky.  (The proof that it knows $a,b$ is that, if it could decrypt the encrypted Vote and MAC it received, it would know two different points on the line defined by $m = a\cdot v +b$.)  If it is not a valid MAC for that Vote, but somehow passes the PEP in the final step, then the cheating EC must have broken either the mix or the PEP.

This is formalised and proved in Section~\ref{subsec:EC-Verif}. Section~\ref{subsec:C-Verif} shows why the paper vote defends against a cheating client.

Informally, honest-but-remembering Receipt Freeness is achieved because for any vote she wishes to pretend to have cast, a voter can always generate a MAC consistent with her commitments to her $a$ and $b$ (which she can open honestly to a coercer).  Server-side re-randomisation achieves Receipt Freeness (as in~\cite{aditya2004efficient}), because the voter's client does not know the randomness used to generate the ciphertexts posted on the WBB.  However, this applies to a coercer who sees only the bulletin board, and does not defend against a coercer who corrupts the EC or taps the channel between the EC and the voting client. This is made more precise in Section~\ref{subsec:RF} after a privacy proof (with no client collusion) in Section~\ref{subsec:Privacy}.

\section{Security proofs} \label{sec:securityProofs}
\subsection{Privacy}  \label{subsec:Privacy}
We follow Kiayias et al. \cite{kiayias2015end} in defining voter privacy of an election via a \textit{Voter Privacy} game denoted by $G^{\mathcal{A},}_{priv}$ that is played between an adversary $\mathcal{A}$ and a challenger $\mathcal{C}$; we do not use a simulator in this version because the adversary is simply trying to guess which voter selection made it into the results. For clarity, we removed some qualifiers from the definitions which do not apply to the scheme we wish to prove. This simplification serves only to strengthen the definitions and improve readability. The game is parameterised by the security parameter $\lambda$, the number of voters $n$, and the number of candidates $m$. We consider a set of candidates $\mathcal{P}$, a set of voters $\mathcal{V}$, and a set of allowed candidate selections $\mathcal{U}$, and introduce an \textit{election evaluation function} $f(\langle \mathcal{U}_1,\ldots,\mathcal{U}_n \rangle)$ that outputs a vector whose $i$th index is the number of times candidate $\mathcal{P}_i$ was voted for ($\mathcal{P}_i$ may be a complex set of choices).
\begin{definition}[Privacy Game]
Denoted by $G^{\mathcal{A}}_{priv}(1^\lambda,n,m)$.
\begin{enumerate}
    \item $\mathcal{A}$ on input $1^\lambda,n,m$ chooses a list of candidates $\mathcal{P} = \{P_1,...,P_m\}$, a set of voters $\V = \{V_1,...,V_n\}$, and the set of allowed candidate selections $\mathcal{U}$, providing $\mathcal{C}$ with the sets $\mathcal{P},\mathcal{V},$ and $\mathcal{U}$.
    \item $\mathcal{C}$ flips a coin $b \in \{0,1\}$ and performs the \textit{Setup} protocol on input $(1^\lambda,\mathcal{P},\mathcal{V},\mathcal{U})$ to obtain $sk,(G, g,q,pk)$, providing $\mathcal{A}$ with $(G, g,q,pk)$.
    \item 
        The adversary $\mathcal{A}$ and the challenger $\mathcal{C}$ engage in an interaction where $\mathcal{A}$ schedules \textit{Cast} protocols of all voters which may run concurrently.  For each voter $V_l \in \V$, the adversary chooses whether $V_l$ is corrupted:
        \begin{itemize}
            \item If $V_l$ is corrupted, $\mathcal{A}$ plays the role of $V_l$ and $\mathcal{C}$ plays the role of $EC$ in the \textit{Cast} protocol.
            \item If $V_l$ is not corrupted, $\mathcal{A}$ provides two candidate selections $(\mathcal{U}^0_l,\mathcal{U}^1_l)$ to the challenger $\mathcal{C}$. They must do so such that $f(\langle\mathcal{U}_l^0\rangle_{V_l\in \HV})=f(\langle\mathcal{U}^1_l)_{V_l\in \HV}\rangle`$ where $\HV$ is the set of honest voters (that is, the election result w.r.t. the honest voters does not leak $b$).
			
			$\mathcal{C}$ operates on $V_l$'s behalf, using $\mathcal{U}^b_l$ as voter $V_l$'s input and playing the role of the $EC$. The adversary is allowed to observe the encrypted data $\{g^\Vote\}_{pk},\{g^\Mac\}_{pk}$ sent to the $EC$ in \textit{Cast}, as well as $\Wbb$, $\Paper_2$ and $\Paper_1$ (after shuffling). When the \textit{Cast} protocol terminates,  $\mathcal{C}$ provides to $\mathcal{A}$ the receipt consisting of the $\vid$ for voter $V_l$ (and therefore data on the WBB indexed by the $\vid$).
        \end{itemize}
    \item $\mathcal{C}$ performs the \textit{Tally} protocol playing the role of the election trustees.  $\mathcal{A}$ is allowed to observe the $\Wbb$.
    \item Finally, $\mathcal{A}$ using all the information collected above (including the contents of the $\Wbb$) outputs a bit $b^*$.
\end{enumerate}
Denote the set of corrupted voters as $\CV$ and the set of honest voters as $\HV = \V \setminus \CV$.  The game returns a bit which is 1 if and only if $b = b^*$.
We say that a voting scheme achieves voter privacy if for any PPT adversary $\mathcal{A}$:
\begin{align*}
    \left\vert Pr[G^{\mathcal{A}}_{priv}(1^\lambda,n,m) =1]-1/2\right\vert  = \negl(\lambda).
\end{align*}
\end{definition}
\paragraph{Assumptions}
For privacy, assume the following are honest:
\begin{itemize}
	\item the Electoral Commission (EC)
	\item the postal channel
	\item all threshold sets of the Election Tellers (ET)
	\item the voter's device
\end{itemize}
\begin{theorem}\label{thm:Privacy}
For any constant $m \in \mathbb{N}$ and $n=poly(\lambda)$, the e-voting system described in section 2 is private with respect to the privacy game $G^{\mathcal{A}}_{priv}(1^\lambda,n,m)$.
\end{theorem}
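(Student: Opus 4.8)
The plan is a sequence of indistinguishable game hops from $G^{\mathcal{A}}_{priv}$ with a fixed challenge bit $b$ to a game whose transcript is manifestly independent of $b$; the advantage is then the sum of the hop distances, each negligible under DDH in $G$ (equivalently, IND-CPA of ElGamal, zero-knowledge of the NIZKs, and privacy of the mixnets) together with the perfect hiding of Pedersen commitments. Because for privacy the EC, the post, every threshold set of trustees, and every honest voter's device are assumed honest, the challenger itself generates the entire WBB transcript, the shuffled batch of $\Paper_1$'s, and the $\Paper_2$/receipt values, so it suffices to show that this joint distribution does not depend on $b$.

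\textbf{Hops 1--2: simulate the proofs and discard $sk$.} First I would replace every NIZK in the transcript --- the $\PrfKnow$ proofs on $\Paper_1$ and those sent to the EC in \textit{Cast}, the decryption proofs $\DecProof$ and plaintext-equivalence proofs of \textit{Tally}, and the shuffle-correctness proofs of both $\Mix$ calls --- by simulated proofs; adaptive multi-theorem zero-knowledge makes this negligible. Then I would replace every honest use of $sk$ (the decryptions producing $\receivedvid$ and the openings $(a,b,r_a,r_b)$ in $\mathcal{B}^{\text{mixed}}$, and the votes in $\mathcal{B}^{\text{tally}}$) by directly posting the plaintext the honest execution would have produced, with the already-simulated proof. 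This is a pure rewriting, since every such plaintext is known to the challenger: $\vid$'s are public; honest voters' parameters are chosen by the challenger; corrupted voters' parameters and paper votes are recovered from their $\PrfKnow$ proofs by the extractor; and the $\mathcal{B}^{\text{tally}}$ multiset is the honest-voter vote multiset --- identical for $b=0$ and $b=1$ by the game's constraint $f(\langle\mathcal{U}^0_l\rangle_{V_l\in\HV})=f(\langle\mathcal{U}^1_l\rangle_{V_l\in\HV})$ --- joined with the extracted corrupted votes. After these hops the challenger never touches $sk$.

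\textbf{Hop 3: swap the payloads via DDH.} The only $b$-dependent parts of $\mathcal{A}$'s remaining view are the per-honest-voter ciphertexts $\{g^{\Vote_l}\}_{pk},\{g^{\Mac_l}\}_{pk}$ (seen as traffic to the EC, re-randomised as $\mathcal{B}^{\text{commit}}_{\vid_l}$, and carried unopened through the first mix and on into $\mathcal{B}^{\text{accepted}}$) and, on the shuffled $\Paper_1$'s and in $\mathcal{B}^{\text{received}}$, the pairing of each honest voter's plaintext vote with their still-encrypted $e_\vid$ and $e_\Param$. Appealing to mixnet privacy for the two $\Mix$ calls makes the plaintext votes in $\mathcal{B}^{\text{received}}$ unlinkable to the decrypted $\receivedvid$'s in $\mathcal{B}^{\text{mixed}}$, and makes the public $(\vid,e_\Vote)$ pairs in $\mathcal{B}^{\text{accepted}}$ unlinkable to the decrypted $\mathcal{B}^{\text{tally}}$; a final IND-CPA hybrid over the honest voters then swaps every $\{g^{\Vote_l}\}_{pk},\{g^{\Mac_l}\}_{pk}$ from its $b=0$ to its $b=1$ value with no other change. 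Since $n=\mathrm{poly}(\lambda)$ and $m$ is constant, the accumulated loss is polynomial in $n$ times the DDH advantage; $c_a,c_b$ contribute nothing, being perfectly hiding. The resulting game is syntactically the same for $b=0$ and $b=1$, so $\left\vert\Pr[G^{\mathcal{A}}_{priv}(1^\lambda,n,m)=1]-1/2\right\vert=\negl(\lambda)$.

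\textbf{Main obstacle.} The delicate point is Hop 2 together with the mixnet-privacy appeals in Hop 3: one must verify that the \emph{entire} published WBB transcript --- which explicitly discloses decrypted VoterIDs and commitment openings for every processed ballot, including adversarial ones --- can be produced with no knowledge of $sk$. This forces an extractor for $\PrfKnow$ on corrupted-voter ciphertexts under concurrently scheduled \textit{Cast} sessions, and a careful check that the ``mix-then-decrypt'' privacy reduction still goes through given the side information the adversary also holds: the plaintext votes visible in $\mathcal{B}^{\text{received}}$ and on the shuffled $\Paper_1$'s, and the public $\vid\mapsto e_\Vote$ links in $\mathcal{B}^{\text{accepted}}$. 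One should also confirm that after the PEPs are simulated every honest voter's $\vid$ still reaches $\mathcal{B}^{\text{accepted}}$ in both worlds --- it does, since an honest device outputs mutually consistent $(\Vote,\Mac,a,b,r_a,r_b)$ --- so that the $\mathcal{B}^{\text{tally}}$ multiset is genuinely $b$-independent.
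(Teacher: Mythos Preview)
Your proposal is correct and follows essentially the same game-hopping strategy as the paper: simulate all NIZKs (mix, decryption, PEP, $\PrfKnow$), eliminate every use of $sk$ by producing the published plaintexts directly (extracting from the adversary's $\PrfKnow$ proofs where necessary), and then appeal to IND-CPA of ElGamal together with mixnet privacy to conclude that the residual view is independent of $b$. The paper orders the first two hops the other way round (it removes decryptions via plaintext knowledge in $G_1$ and simulates proofs in $G_2$) and in its final hop replaces all ciphertexts by encryptions of random values rather than swapping from the $b{=}0$ to the $b{=}1$ payloads, but these are cosmetic variants of the same argument; your explicit identification of the simulation-extraction interaction and of the side-information available to the adversary during the mix-privacy reductions in fact makes the delicate steps more visible than the paper's terse treatment.
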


A proof is in Appendix~\ref{app:privacyProof}. It considers
the information visible to the adversary.
During \textit{Cast} the adversary sees
\ifcompressing{
$	\vid, c_a, c_b $ and $
\vid, \rerand(\{g^\Mac\}_{pk}, \{g^\Vote\}_{pk}).$}
{
\begin{subequations}
\begin{align*} 
	\vid, c_a, c_b \\
	\vid, \rerand(\{g^\Mac\}_{pk}, \{g^\Vote\}_{pk})
 \end{align*}
 \end{subequations}
}

During \textit{Tally} the adversary sees 
\begin{subequations}
\begin{align*}
 \Vote, \{\vid\}_{pk}, \{a,b,r_a,r_b\}_{pk}, \PrfKnow_{\mathbb{G}, pk}(\{a,b,r_a,r_b\}_{pk}) \\
(\ReceivedVote, \rerand\{\receivedvid\}_{pk}, \rerand\{a,b, r_a, r_b\}_{pk}) \\
(\{\ReceivedVote\}_{pk}, (a,b, r_a, r_b), \receivedvid, \text{decryption proof})\\
\Dec_{\mathbb{G}, pk}(\{g^\Vote\}_{pk})
\end{align*}
\end{subequations}  \VTNote{Strictly speaking, the above is not quite right because I haven't separated out the different ciphertexts, but that wouldn't fit so I'm leaving it as it is.}
Crucially, the adversary cannot use $\{\vid\}_{pk}$ in the above to relate $\Vote$ with $\vid$, since this relationship is forgotten when attaching $\{\vid\}_{pk}$  to $Paper_1$. Plaintext commitment openings are likewise not linkable to the plaintext vote.

\subsection{
\ifcompressing{Honest-but-remembering RF}
{Honest-but-remembering Receipt-freeness}} \label{subsec:RF}
We prove honest-but-remembering receipt-freeness (which is a stronger notion than privacy) against a weaker adversary.  

Consider a coercer who does not collude with the EC, but does make demands of the voting client.  We can prove only a passive form of receipt freeness, in which the colluding client follows the protocol honestly except for recording all its secrets.  We also have to assume that the channel to the EC is not tapped by the adversary, which models an attacker who does not have the capacity to intercept communications (such as TLS) over the Internet.\footnote{TLS is \emph{not} an untappable channel---people can prove what they sent by exposing the AES key.}  (At least one untappable channel in one direction is necessary and sufficient \cite{hirt2000efficient}, though we have two, and have not here considered an untrustworthy EC.)


\paragraph{Assumptions}
The receipt-freeness game $G^{\mathcal{A},\mathcal{S}}_{RecFree}(1^\lambda,n,m)$ is in Appendix~\ref{app:RF}. It is very similar to the privacy game, except that the adversary does not collude with the EC and cannot tap the channel between the voter and the EC.  
The adversary may view only the WBB.  It may, however, demand to see a (possibly faked) view from any honest voter. 

\paragraph{Setup}
The coercer will demand that the voter cast some vote $v$, and then provide the coercer with a transcript describing the setup, ballot generation and ballot casting for $v$.  

\paragraph{Proof main idea}
The Voter's coercion-resistance strategy is\ifcompressing{}{ simply} to truthfully reveal $a, b, r_a, r_b$ but claim to have sent
$\Mac_{\text{cr}} = a.v+b \bmod q$ as their MAC.  We rely on the re-randomised encrypted MAC that the EC posts on the WBB being indistinguishable from a re-randomised encryption of $\Mac_{\text{cr}}$.

\begin{theorem}
	For any constant $m \in \mathbb{N}$ and $n=poly(\lambda)$, the e-voting system described in section 2 has receipt freeness with respect to the game $G^{\mathcal{A},\mathcal{S}}_{RecFree}(1^\lambda,n,m)$
\end{theorem}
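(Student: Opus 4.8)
The plan is to exhibit the simulator $\mathcal{S}$ explicitly and then show, by a short sequence of hybrids, that the adversary's view in the game with $b=0$ is computationally indistinguishable from its view with $b=1$. The simulator formalises the coercion‑evasion strategy sketched above: when the coercer demands that the voter cast $v$ while she actually casts her true selection $v'$, $\mathcal{S}$ hands the coercer the \emph{genuine} secrets $a,b,r_a,r_b$ (so the openings of $c_a,c_b$ check out), the genuine ciphertext $e_\Param=\{a,b,r_a,r_b\}_{pk}$ with its genuine randomness and proof of knowledge, the coerced vote $v$, the MAC $\Mac_{\text{cr}}=a\cdot v+b\bmod q$, and \emph{freshly invented} randomness $s_\Mac,s_\Vote$ together with ciphertexts $\{g^{\Mac_{\text{cr}}}\}_{pk},\{g^{v}\}_{pk}$ presented as ``what was sent to the EC''. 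This last fabrication is sound precisely because the EC re‑randomises before posting, so the voter is not expected to be able to relate her claimed ciphertexts to the ones on the WBB, and the WBB is the only thing the coercer sees directly.

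\textbf{Indistinguishability.} I would argue this in three hops. (i) Replace every EC re‑randomisation $\rerand(\cdot)$ appearing on the WBB by a fresh encryption of the same plaintext; since $\rerand((e_1,e_2))=(e_1g^r,e_2pk^r)$ with uniform $r$ is identically distributed to a fresh ElGamal encryption of the same message, this hop is perfect. (ii) Voter by voter, switch the plaintexts underlying the two WBB ciphertexts $\mathcal{B}^{commit}_\vid$ from the true pair $(g^{\Mac},g^{\Vote})=(g^{av'+b},g^{v'})$ to the claimed pair $(g^{\Mac_{\text{cr}}},g^{v})=(g^{av+b},g^{v})$, reducing each step to the IND‑CPA security of ElGamal (hence to DDH). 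The reduction never needs $sk$: these ciphertexts are never threshold‑decrypted; they feed only into the PEPs of Steps~\ref{Step:ECVotePEP}--\ref{Step:ECMACPEP} and into the homomorphic reconstruction $\overline{e}_\Mac\gets(e_\Vote)^a+\{g^b\}_{pk}$ (for which the reduction knows $a,b$, as it plays the honest voter), and they reach the final tally only \emph{after} the mix $\mathcal{B}^{accepted'}=\Mix(\mathcal{B}^{accepted})$, so the relevant PEPs are discharged via their zero‑knowledge simulators. (iii) Everything else on the WBB---the plaintext list $\mathcal{B}^{received}$, the decrypted secrets, and the final mixed‑and‑decrypted tally---is handled exactly as in the privacy proof of Theorem~\ref{thm:Privacy}: $\mathcal{B}^{received}$ is posted after a uniform shuffle and so leaks only the multiset of received votes, which is invariant between the two worlds by the constraint $f(\langle\mathcal{U}^0_l\rangle)=f(\langle\mathcal{U}^1_l\rangle)$; the final mix severs the link from $\vid$ to the decrypted vote; and decrypting $e_\Param$ reveals only $(a,b,r_a,r_b)$, which the voter reveals honestly anyway.

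\textbf{Consistency of the fabricated view.} Every check the coercer can run on $\mathcal{S}$'s output passes: the perfectly hiding commitments $c_a,c_b$ open to the genuine $a,b$; the identity $\Mac_{\text{cr}}=a\cdot v+b\bmod q$ holds by construction; the proof of knowledge on $e_\Param$ is real; and the only objects that could contradict the claimed $(v,\Mac_{\text{cr}})$---namely the ciphertexts in $\mathcal{B}^{commit}_\vid$---are re‑randomised by the EC and so carry no randomness the voter is expected to know. Summing the DDH advantages over the polynomially many voters, together with the statistical zero of hop (i) and the simulation error of the NIZKs, yields the claimed negligible bound.

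\textbf{Main obstacle.} The delicate point is the tension in hop (ii): the challenger running the honest tally must perform threshold decryption, yet the IND‑CPA reduction it simulates must not hold $sk$. Resolving this requires checking, against the full \textit{Tally} algorithm, that for each voter the two switched ciphertexts are exactly the ones that are only ever mixed, re‑randomised, or placed inside PEP statements before any decryption occurs, so that every proof touching them can be produced by a zero‑knowledge simulator rather than from a witness. A secondary obstacle is justifying that the plaintext votes exposed in $\mathcal{B}^{received}$ (and the ciphertexts listed against $\vid$ in $\mathcal{B}^{accepted}$) leak only multiset/tally information; this is where we use at least one honest mix trustee and model the physical batch shuffle of $\Paper_1$ as uniform and independent. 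Finally, the theorem is deliberately modest: it holds only for the restricted adversary of $G^{\mathcal{A},\mathcal{S}}_{RecFree}$---one who sees only the WBB, does not collude with the EC, and does not tap the voter--EC channel---against an honest‑but‑remembering client, and the proof should make each of these uses explicit.
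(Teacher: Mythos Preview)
Your proposal is correct and follows essentially the same approach as the paper. Your simulator is identical to the paper's---truthfully reveal $a,b,r_a,r_b$ and the genuine $e_\Param$, but fabricate the vote, MAC, and the ciphertexts ``sent to the EC''---and your three hops cover the same ground as the paper's game sequence $G_0$--$G_4$: simulate the PEPs/decryption/mix proofs so $sk$ is never needed, swap the WBB ciphertexts under IND-CPA, and rely on the shuffle/mix plus the constraint $f(\langle\mathcal{U}^0_l\rangle)=f(\langle\mathcal{U}^1_l\rangle)$ for the plaintext side. The only difference is expository ordering: the paper first eliminates all uses of $sk$ (its $G_1,G_2$) and only then switches plaintexts ($G_3$), whereas you interleave the ZK simulation into hop (ii); your ``Main obstacle'' paragraph shows you are aware this is exactly the point that forces that dependency.
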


A proof is given in Appendix~\ref{app:RF}.

\paragraph{What this means in practice} The assumption that a coercer cannot tap the electronic channel from client to EC excludes adversaries associated with any network-based attacker, including those who see only encrypted TLS traffic. The assumption that the coercer cannot tap communications through the paper channel excludes an attacker who is physically present to watch the voter generate and post their vote.
It also assumes that a voter filming themselves creating and posting the envelope would not be convincing. We do not know how hard it is to fake such a video in practice, but note that our protocol does not add anything to such a video (such as specific ciphertexts) that would make it any more convincing than any other, except through collusion with the EC.

Honest-but-remembering receipt-freeness is better than no receipt freeness in the following practical scenario: suppose that the coercer has compromised the voter's computer, and seemingly has read access to all of their communications but doesn't know whether this access is genuine or simulated (e.g. if the voter is running the client in a virtual machine and controlling what the attacker sees).  With Helios, the attacker would be able to distinguish these two cases by verifying that the encrypted vote constructed by the device was posted to the bulletin board.  With our system, assuming the attacker can't verify what was sent to the EC, the system does not provide a way for the attacker to distinguish whether the voter sent the vote it seems to have sent, or intercepted it outside the coercer's view and sent something else. Hence read-only access does not allow coercion. However, if the coercer can instruct the voter to deviate from the protocol then coercion does succeed.

\subsection{Verifiability against a cheating EC} \label{subsec:EC-Verif}
Here we formalise the argument that, if the corrupt EC successfully posts a valid MAC for the claimed \textit{ReceivedVote} then it knows $a,b$, so this happens only for a negligible number of votes without client collusion. We use a modified version of the end-to-end verifiability game from~\cite{kiayias2015end}; our version does not allow the adversary to control the client and the EC simultaneously. The definition uses a \textit{vote extractor} algorithm $\mathcal{E}$, which given an election transcript $\tau$ and a set of honest voter receipts $\alpha_l$ outputs the set of \textit{dishonest} votes $\{\mathcal{U}_l\}_{\mathcal{V}_l\in\mathcal{V}\setminus\mathcal{\tilde{V}}}$. (We will use the metric $d_1$, meaning the absolute difference in number of votes for each candidate.)

\begin{definition}[EC Verifiability Game (after \cite{kiayias2015end})]
	We denote the game by $G^{\mathcal{A},\mathcal{E},d,\theta}_{EC-Ver}(1^\lambda, m,n)$.
	
	\begin{enumerate}
		\item $\mathcal{A}$ on input $1^\lambda,n,m,$ chooses a list of candidates $\mathcal{P} = \{P_1,...,P_m\}$, a set of voters $\V = \{V_1,...,V_n\}$, and the set of allowed candidate selections $\mathcal{U}$.  It provides $\mathcal{C}$ the sets $\mathcal{P},\mathcal{V},$ and $\mathcal{U}$.
		\item $\mathcal{A}$ performs the \textit{Setup} protocol on input $(1^\lambda,\mathcal{P},\mathcal{V},\mathcal{U})$ to obtain $sk,(G, g,q,pk)$, providing $\mathcal{C}$ with $(G, g,q,pk)$.
		\item
		The adversary $\mathcal{A}$ and the challenger $\mathcal{C}$ engage in an interaction where $\mathcal{A}$ schedules \textit{Cast} protocols of all voters which may run concurrently.  For each voter $V_l \in \V$, $\mathcal{A}$ can either completely control the voter or allow $\mathcal{C}$ to operate on their behalf, in which case $\mathcal{A}$ provides a candidate selection $\mathcal{U}_l$ to $\mathcal{C}$.  Then, $\mathcal{C}$ engages with the adversary $\mathcal{A}$ in the Cast protocol so that $\mathcal{A}$ plays the role of the EC and the postal service.  If the protocol terminates successfully, $\mathcal{C}$ obtains the receipt $\alpha_l=\vid$ on behalf of $V_l$.
		\VTNote{We need to be careful about what we're calling a receipt here.  We don't really have encrypted receipts in the usual E2E way here, but we don't want it to look as if your $\vid$ is generated when you vote, or we are subject to clash attacks.  Your $\vid$ needs to be demonstrably not the same as someone else's.  It may be there's no receipt as such, just an opportunity to observe your row of the WBB.}
		Let $\tilde{\mathcal{V}}$ be the set of honest voters (i.e. those controlled by $\mathcal{C}$) that terminated successfully.
		
		\item $\mathcal{A}$ posts the election transcript $\tau$ to the WBB.
	\end{enumerate}

	The game returns a bit which is 1 iff the following conditions are true:

\begin{enumerate}
	\item $\Big|\left\{l\in [n]\ |\ \textit{VoterVerify}(\alpha_l)\text{ passes}\right\}\Big| \geq \theta$ (i.e. at least $\theta$ honest voters verified successfully);
	\item $\Result(\tau, \mathcal{O}) \neq \bot$; and
	\item for the metric $d_1$ and election outcome function $f$:
				$$d_1(\Result(\tau, \mathcal{O}), f(\langle\mathcal{U}_1,\ldots,\mathcal{U}_n\rangle)) > d$$
		where $\{\mathcal{U}_l\}_{V_l \in \mathcal{V} \setminus \tilde{\mathcal{V}}} \leftarrow \mathcal{E}(\tau, \{\alpha_l \}_{V_l \in \tilde{\mathcal{V}}})$
		(That is, the deviation from the true result is larger than the accepted error $d$.)
\end{enumerate}
	We say that a voting scheme achieves EC verifiability if for any PPT adversary $\mathcal{A}$:
	\begin{align*}
	Pr\left[G^{\mathcal{A},\mathcal{E},d,\theta}_{EC-Ver}(1^\lambda,n,m) =1\right]= \negl(\lambda).
	\end{align*}
\end{definition}

\subsubsection{A simplified \ifcompressing{}{version of the} protocol for proving verifiability}
Consider a simplified version of the protocol in which there is only one decryption authority (this is not the privacy game after all). Remember that the attacker can modify the plaintext ballots as well as the EC's computations, though it does not control the voting client of the honest voters.

\begin{theorem}
For any constant $m\in\mathbb{N}$ and $n=poly(\lambda)$, a specified result function $\Result(\tau, \mathcal{O})$ defining a threshold $0 \leq d < M$ for an election with margin $M$, and $\theta=|\mathcal{V}|-(M-d)$, the simplified single-decryptor version of the protocol satisfies EC verifiability.
\end{theorem}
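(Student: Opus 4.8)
The plan is to bound $\Pr[G^{\mathcal{A},\mathcal{E},d,\theta}_{EC-Ver}(1^\lambda,n,m)=1]$ by negating the third winning condition $d_1(\Result(\tau,\mathcal{O}),f(\langle\mathcal{U}_1,\dots,\mathcal{U}_n\rangle))>d$. First I would fix the (unbounded) vote extractor $\mathcal{E}$: given $\tau$ and the honest receipts $\{\alpha_l=\vid_l\}_{V_l\in\tilde{\mathcal{V}}}$, decrypt the encrypted voter ID attached to each received ballot in $\mathcal{B}^{\text{received}}$, and output as the dishonest votes the multiset of plaintext votes $\Vote'$ posted against the $\vid$s that are \emph{not} honest receipts (using $\bot$ for the remaining, never-terminating honest voters, and truncating/padding to size $|\mathcal{V}\setminus\tilde{\mathcal{V}}|$). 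With this choice the extracted dishonest votes are exactly the plaintext ballots the EC actually posted, which is what keeps the final accounting tight.

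The core is a single lemma: for every voter $V_l$ whose client is honest, except with probability at most $1/(q-2)$ over $V_l$'s own randomness, if $\vid_l$ ends up in $\mathcal{B}^{\text{accepted}}$ then the vote decrypted for $\vid_l$ in $\mathcal{B}^{\text{tally}}$ equals $V_l$'s intended selection $\mathcal{U}_l$. I would prove this by unwinding the acceptance conditions of \textit{Tally}: a unique correct opening of $(c_a,c_b)$ forces, by computational binding of the Pedersen commitment (a reduction to hardness of $\dlog_{h_1}h_2$), the decrypted parameters to be $V_l$'s own $(a,b,r_a,r_b)$; soundness of the mix proof forces $\overline{e}_\Vote$ to re-encrypt the plaintext $\Vote'$ the EC posted against $\vid_l$ in $\mathcal{B}^{\text{received}}$; soundness of the first PEP forces the vote committed in $\mathcal{B}^{\text{commit}}_{\vid_l}$ to equal $\Vote'$; and soundness of the second PEP, together with the homomorphic identity $\overline{e}_\Mac=(e_\Vote)^a\cdot\{g^b\}_{pk}$, forces the MAC committed in $\mathcal{B}^{\text{commit}}_{\vid_l}$ to equal $a\cdot\Vote'+b\bmod q$. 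The decisive point is timing: the EC must publish $\mathcal{B}^{\text{commit}}_{\vid_l}$ in Step~\ref{Step:ECPostsVoteMAC} of \textit{Cast} \emph{before} the device prints and mails $\Paper_1$, so at that instant its entire view of $V_l$ is $c_a,c_b$ (perfectly hiding, hence distributed independently of $a,b$) together with $\{g^\Vote\}_{pk},\{g^\Mac\}_{pk}$, which --- even decrypted with the $sk$ the adversary itself generated in \textit{Setup} --- reveals only the single point $(\mathcal{U}_l,\Mac_l)$ on the line $y=ax+b$. Conditioned on that view and on $b\ne 0$, the slope $a$ is uniform over at least $q-2$ values; hence if $\Vote'\ne\mathcal{U}_l$, the probability that the pre-committed MAC equals $a\Vote'+b=\Mac_l+a(\Vote'-\mathcal{U}_l)$ is at most $1/(q-2)$, because $\Vote'-\mathcal{U}_l\not\equiv 0\pmod q$. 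Since the decrypted vote for an accepted $\vid_l$ is precisely that committed $\Vote'$, this proves the lemma; summing the negligible binding/soundness error terms and the $1/(q-2)$ term over all $n=\mathrm{poly}(\lambda)$ honest clients gives a good event $G$ with $\Pr[\neg G]=\negl(\lambda)$.

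The remaining accounting is routine under $G$. Suppose the adversary wins: then $\Result(\tau,\mathcal{O})=\mathcal{O}\ne\bot$, so \textit{GlobalVerify} passes and $\varepsilon<d$. Every $\vid\in\mathcal{L}^{\text{tally}}$ is a registered ID (the per-row \textit{GlobalVerify} check of a correct opening in $\mathcal{B}^{\text{registered}}$) and, by the eligibility assumption, belongs to a genuine voter; for the honest such voters the counted vote is $\mathcal{U}_l$ by the lemma (a never-terminating honest client's $\vid$ cannot be accepted, since it never supplies an opening of its own $(c_a,c_b)$), and for the dishonest ones it equals the plaintext ballot the EC posted against that $\vid$, i.e. exactly the value $\mathcal{E}$ extracted --- so every tallied ballot contributes identically to $\Result(\tau,\mathcal{O})$ and to $f(\langle\mathcal{U}_1,\dots,\mathcal{U}_n\rangle)$. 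The only possible mismatches come from voters in $(\mathcal{L}^{\text{registered}}\cup\mathcal{L}^{\text{received}})\setminus\mathcal{L}^{\text{tally}}$, whose votes appear in $f$ (honest voters' true selections, or the dishonest plaintext ballots read off $\mathcal{B}^{\text{received}}$) but not in $\Result(\tau,\mathcal{O})$; there are $\varepsilon<d$ of them, so $d_1(\Result(\tau,\mathcal{O}),f(\cdot))\le\varepsilon<d$, contradicting the winning condition. (The stipulated $\theta=|\mathcal{V}|-(M-d)$ is the complementary side of the same statement --- at most $M-d$ voters may fail \textit{VoterVerify}, which for a margin-$M$ election cannot flip the announced winner --- but the quantitative bound above only needs $\varepsilon<d$.) Therefore $\Pr[G^{\mathcal{A},\mathcal{E},d,\theta}_{EC-Ver}=1]\le\Pr[\neg G]=\negl(\lambda)$.

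The main obstacle is the lemma, and inside it the step that actually cashes in the perfectly-hiding commitment: one has to argue rigorously that, under the adaptive, concurrently scheduled \textit{Cast} sessions the adversary controls, its view at the moment it posts $\mathcal{B}^{\text{commit}}_{\vid_l}$ is independent of $a$ --- this uses that each client samples fresh randomness, that the commitments are perfectly (not merely computationally) hiding, that $\PrfKnow$ is zero knowledge and can be simulated, and that possession of $sk$ yields only one point of the MAC line. Folding the binding, mix- and decryption-soundness, and the two PEP-soundness reductions into one hybrid, and pinning down the extractor and the multiset accounting above (including the never-terminating honest clients and the out-of-scope eligibility assumption), is the remaining and more mechanical part.
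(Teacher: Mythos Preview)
Your proposal is correct and follows essentially the same approach as the paper: both hinge on the Carter--Wegman forgery bound (the EC, even with $sk$, sees only one point on the line $y=ax+b$ at the time it must post $\mathcal{B}^{\text{commit}}_{\vid}$, while $c_a,c_b$ are perfectly hiding), combined with soundness of the mix, decryption, and PEP proofs and binding of the Pedersen commitments.

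The differences are organisational rather than substantive. The paper defines its extractor on the corrupt $\vid$s by reading off $\ReceivedVote$ from the transcript after checking the same acceptance conditions you unwind; your extractor decrypts the encrypted $\vid$s in $\mathcal{B}^{\text{received}}$ and takes the posted plaintext votes for non-honest IDs. Both choices make the dishonest contribution to $f$ agree with what the EC actually recorded, so the accounting closes the same way. You package the core step as a single per-voter lemma and then do an explicit $\varepsilon<d$ accounting against $\Result$; the paper instead walks backwards through \textit{Tally} and \textit{GlobalVerify} case-by-case and ends with the aggregate bound $\eta_1+\eta_2+1/(q-1)$ without spelling out the $d_1\le\varepsilon$ step. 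Your formulation is slightly cleaner in that respect, and your care about the timing of $\mathcal{B}^{\text{commit}}$ and about simulating $\PrfKnow$ to argue independence of $a$ is exactly the point the paper gestures at by invoking a computationally-unbounded adversary who brute-forces $(\Vote,\Mac)$. The constants $1/(q-2)$ versus the paper's $1/(q-1)$ are immaterial.
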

\begin{proof}
    We begin by defining the vote extractor $\mathcal{E}$. For each corrupt voter ID, it considers the commitment pair posted by the voter's device in Step~\ref{Step:VoterCommit} of \textit{Cast}, and the encrypted vote-MAC pair posted by the EC in Step~\ref{Step:ECPostsVoteMAC} of \textit{Cast}. It inspects the WBB transcript $\tau$ and outputs:
    \begin{enumerate}
        \item zero, if the $\vid$ has no matches in Step~\ref{Step:FirstWBBMsgs} of \textit{Tally votes} or no correct opening in Step~\ref{Step:uniqueCommitOpen}.
        \item zero, if the $\vid$ has more than one such match or correct opening\EMNote{Attack where the client prints bogus openings and sends someone else the real ones}
        \item zero, if there is a unique match and correct opening but either of the PETs in Steps~\ref{Step:ECVotePEP} and~\ref{Step:ECMACPEP} are not successful
        \item $\ReceivedVote$ otherwise.
    \end{enumerate}

    The first three cases correspond to a vote that was not submitted, or a verification failure. Case 4 represents successful verification of a vote that makes it into the tally. We will argue that the adversary has a negligible probability of successfully (and undetectably) substituting a vote with a different one in this case, and thus producing a deviation larger than the accepted error $d$. If the adversary can forge any of the zero-knowledge proofs, they have the ability to do this substitution; for example, a forged mix proof could allow many votes to be tampered with; a forged decryption proof could make a false claim about an encrypted vote. The soundness properties for these proofs guarantee the adversary has a negligible probability $\eta_1=\negl(\lambda)$ of doing so successfully. (This is where we use the adaptive soundness property, because the EC as prover chooses the ciphertext.)

	From here we assume the ZKPs are true; that is, the statement they assert is true, and there is some witness for each. 
	
	We walk backwards through the protocol. Each tallied vote in Step~\ref{Step:DecryptVote} of \textit{Tally votes} corresponds to:
	\begin{enumerate}
		\item a $\vid$ (via the mix and decryption proofs verified at Steps~\ref{Step:VerFinalMix} and~\ref{Step:VerFinalDecrypt} of \textit{GlobalVerify})
		\item secret parameters $a, b$ (via the ID matching verified at Step~\ref{Step:VerFacts} of \textit{GlobalVerify})
		\item a received vote (via the mix and decryption proofs verified at Steps~\ref{Step:VerFirstMix} and~\ref{Step:VerFirstDecrypt} of \textit{GlobalVerify})
		\item an encrypted MAC and vote from Step~\ref{Step:ECPostsVoteMAC} of \textit{Cast} posted \textbf{before the adversary knew} $a$ or $b$ (via the PETs verified at Step~\ref{Step:VerPETs} of \textit{GlobalVerify}, as well as the above mix and decryption proofs)
	\end{enumerate}

	Our attention turns to the commitments posted by the voter's device. (Remember that the BB is not under the adversary's control and hence the cheating EC cannot prevent the client from uploading its initial commitments.) Also note, that since \textit{GlobalVerify} passes only one commitment is present for each Voter and the voter's device checked that the commitment was the one it uploaded. At Step~\ref{Step:ECPostsPaperInfo} of \textit{Process Vote} (Algorithm~\ref{fig:tallyingEachBallot}), the EC must choose a particular vote and encrypted commitment openings $a, b$ to post alongside the vote and encrypted $\receivedvid$. There are three possibilities for such a commitment opening, compared to the commitment posted alongside $\vid$ in Step~\ref{Step:VoterCommit} of \textit{Cast}.

	\begin{enumerate}
		\item The opening may match the commitment.
		\item The opening may match a different voter's commitment.
		\item The opening may match no voter's commitment.
	\end{enumerate}

	Case 1 is the successful case where the correct commitment is opened; the security properties of Pedersen commitments guarantee the opening is legitimate except with only negligible probability $\eta_2=\negl(\lambda)$. Note that the EC cannot submit many possible openings and hope that one is a successful forgery --- the uniqueness condition in Step~\ref{Step:VerFacts} of \textit{GlobalVerify} prevents multiple attempted openings from being accepted. Case 2 will not pass verification, since only openings where $\receivedvid = \vid$ should be accepted in Step~\ref{Step:VerFacts} of \textit{GlobalVerify}. Similarly, Case 3 will not pass verification at the same step. We therefore discount the possibility of forged commitments for the remainder of the discussion.

	We now arrive at the key argument of the voting scheme. We will demonstrate that even a computationally-unbounded adversary cannot cheat in these circumstances with non-negligible probability. This adversary receives the genuine voter ID and ciphertexts $\{g^\Vote\}_{pk}, \{g^\Mac\}_{pk}$ during \textit{Cast}, which they can brute-force to produce plaintexts $\Vote, \Mac$. They will post encryptions of \textbf{different} values $\Vote_\text{cheat}, \Mac_\text{cheat}$ to the WBB in Step~\ref{Step:ECPostsVoteMAC} of \textit{Cast}. The PETs verified in Steps~$\ref{Step:VerPETs}$ and $\ref{Step:VerFacts}$ of \textit{GlobalVerify} (which we assume are honest) ensure that
	$$a\cdot\Vote_\text{cheat}+b=\Mac_\text{cheat}\text{ with }\Vote_\text{cheat}\neq\Vote$$
	
	But the adversary also knows that $a\cdot\Vote+b=\Mac$, and thus knows two points on the line defined by $a$ and $b$. The adversary has therefore extracted $a$ and $b$ from the information it had received by Step~\ref{Step:ECPostsVoteMAC} of \textit{Cast}, which included only one point on the line and two perfectly-hiding commitments to $a$ and $b$. However, given a fixed pair $a,b\in\{1,\ldots,q-1\}$, a vote, and a MAC there are $q-2$ other pairs
	$$a'=a+k,\ b'=b-k\cdot Vote\text{ for }k\in\{1,\ldots,q-1\}$$

	such that $a'\cdot\Vote+b'=\Mac$.

	Perfectly-hiding commitments leak no information; the adversary must therefore have guessed $a$ and $b$. Since $a$ and $b$ were chosen uniformly at random, the adversary can do so with probability $\frac{1}{q-1}$.

	We are left with three ways the adversary can succeed: by forging ZKPs (with probability $\eta_1$), by forging commitments (with probability $\eta_2$), or by forging MAC/vote pairs (with probability $\frac{1}{q-1}$). If the adversary does not forge a ZKP, it must forge commitments or MAC/vote pairs for at least $d$ votes --- but the probability for forging these for even one vote is negligible. All told, any PPT adversary must therefore have advantage at most
	$\eta_1+\eta_2+\frac{1}{q-1}=\negl(\lambda).$
\end{proof}

\subsection{Recorded-as-cast Verifiability against a cheating client} \label{subsec:C-Verif}
We claimed that the protocol allowed voters to detect cheating against an adversary who controls either the client or (the postal service and the EC) but not both.  We therefore assume that if the client is corrupted, the paper ballot is properly received and processed at the electoral commission. We assume that $\theta=|\mathcal{V}|-(M-d)$ voters check their plain paper printout with their vote and their VoterID.  Note that the \emph{Voter} is honest, but an honest voter's client may be malicious.

The client verifiability game is defined in Appendix~\ref{app:clientVerif} and is very similar to that in \cite{kiayias2015end}. The cheating clients win if
the election tally is accepted, but is substantially different from the true result.  The proof relies on the assumption that the paper ballot is properly posted and processed.

\begin{theorem}
	For any constant $m\in\mathbb{N}$ and $n=poly(\lambda)$, a specified result function $\Result(\tau, \mathcal{O})$ defining a threshold $0 \leq d < M$ for an election with margin $M$, and $\theta=|\mathcal{V}|-(M-d)$, the simplified single-decryptor version of the protocol satisfies client verifiability.
	
\end{theorem}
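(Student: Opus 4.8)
The plan is to run essentially the same argument as for a cheating EC, but with the trust assumptions reversed. The first step is to record the crucial asymmetry: since the postal service, the electoral commission, and every threshold set of tellers are honest, every step of \textit{Process vote} and \textit{Tally votes} --- the double-envelope matching, the re-randomisations, the mix, the threshold decryptions, and all the plaintext equivalence proofs --- is carried out honestly, and in particular each zero-knowledge proof checked by \textit{GlobalVerify} is generated on a true statement and is sound. Hence the adversary's only leverage is through the client devices: the commitments $(c_a,c_b)$ written to $\mathcal{B}^{\text{registered}}_{\vid}$ in Step~\ref{Step:VoterCommit} of \textit{Cast}, the tuple $(\vid, e_{\Vote}, e_{\Mac}, \PrfKnow(e_{\Vote}), \PrfKnow(e_{\Mac}))$ handed to the EC, and the printed contents of $\Paper_1$ and $\Paper_2$. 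As elsewhere in the paper I would also assume the (out-of-scope) authentication mechanism that stops a client writing to another voter's WBB row, together with distinctness and owner-recognisability of VoterIDs; with the uniqueness checks in \textit{GlobalVerify} this rules out clash-style attacks. I would then define the vote extractor $\mathcal{E}$ exactly as in the EC case: on $\tau$, for each $\vid$ that is not an honest voter's receipt, output the plaintext vote obtained from the $\mathcal{B}^{\text{accepted}}$ entry indexed by $\vid$ (equivalently, the plaintext vote the honest EC posted for that ballot into $\mathcal{B}^{\text{received}}$ in Step~\ref{Step:ECPostsPaperInfo} of \textit{Process vote}) when that entry exists and its opening and PETs verify, and zero otherwise.

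The heart of the proof is the claim that for every honest voter $V_l$ whose \textit{VoterVerify} succeeds, the single vote recorded under $V_l$'s $\vid$ in $\mathcal{B}^{\text{accepted}}$ is an encryption of $V_l$'s intended selection $\mathcal{U}_l$, and the paper record attributes $\mathcal{U}_l$ to $V_l$ as well. I would argue this step by step. \textit{VoterVerify} succeeding means the human-readable vote on $\Paper_1$ equals $\mathcal{U}_l$, the $\vid$ on $\Paper_2$ is $V_l$'s genuine identifier, and that $\vid$ occurs in $\mathcal{B}^{\text{accepted}}$. Because the post and EC are honest, $\Paper_2$ arrives intact, $\vid$ is matched to the roll and used to fetch the correct $D_{\vid}$, so the ciphertext $\{\vid\}_{pk}$ attached to $\Paper_1$ is correct; $\Paper_1$ arrives intact, so its proof of knowledge must verify (else the ballot is rejected and $\vid$ could not appear in $\mathcal{B}^{\text{accepted}}$), and the EC honestly posts $(\mathcal{U}_l, \{\vid\}_{pk}, \rerand(e_{\Param}))$ into $\mathcal{B}^{\text{received}}$. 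For $\vid$ to reach $\mathcal{B}^{\text{accepted}}$, \textit{Tally votes} must, at Step~\ref{Step:uniqueCommitOpen}, find a unique correct opening of $(c_a,c_b)$ among the decrypted parameters --- by computational binding of Pedersen commitments this opening is, except with probability $\eta_1=\negl(\lambda)$, the pair $(a,b)$ the client originally committed to --- and both PETs must pass. The PET of Step~\ref{Step:ECVotePEP} compares the mixed re-encryption of $g^{\mathcal{U}_l}$ (carried along $\mathcal{B}^{\text{received}}$ from $\Paper_1$) with the vote component of $\mathcal{B}^{\text{commit}}_{\vid}$, which is the EC's honest re-randomisation of the very ciphertext $e_{\Vote}$ the client transmitted; soundness forces them to encrypt the same value, so the client's electronically cast vote was also $\mathcal{U}_l$. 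The entry appended in Step~\ref{Step:acceptedVotes} is exactly that vote component, an encryption of $g^{\mathcal{U}_l}$, and the uniqueness conditions of \textit{GlobalVerify} ensure it is counted once. Since $\Paper_1$ is delivered untouched, it also contributes $\mathcal{U}_l$ directly to the paper outcome $\mathcal{O}$, proving the claim.

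With the claim in hand the counting is short. Suppose the adversary wins: at least $\theta=|\mathcal{V}|-(M-d)$ honest voters verify, $\Result(\tau,\mathcal{O})\neq\bot$ (so $\varepsilon<d$ and \textit{GlobalVerify}$(\tau)$ passes), and $d_1(\Result(\tau,\mathcal{O}), f(\langle\mathcal{U}_1,\ldots,\mathcal{U}_n\rangle))>d$ with the non-verifying voters' votes supplied by $\mathcal{E}$. By the claim, $\Result(\tau,\mathcal{O})=\mathcal{O}$ and $f$ agree on every verifying honest voter, and each such voter lies in $\mathcal{L}^{\text{tally}}$. For a non-verifying voter whose ballot does reach $\mathcal{B}^{\text{accepted}}$, the same PET argument (which used nothing about verification) shows the recorded vote equals the plaintext vote on that voter's $\Paper_1$, i.e.\ equals both $\mathcal{E}$'s output and the paper record, so it contributes nothing to $d_1$. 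The only possible discrepancies therefore come from ballots in $(\mathcal{L}^{\text{registered}}\cup\mathcal{L}^{\text{received}})\setminus\mathcal{L}^{\text{tally}}$, of which there are exactly $\varepsilon<d$; hence $\Result(\tau,\mathcal{O})$ and $f(\langle\mathcal{U}_1,\ldots,\mathcal{U}_n\rangle)$ differ by at most $\varepsilon<d$ under $d_1$, contradicting the win condition, unless the negligible event of a broken Pedersen opening (probability $\eta_1$) occurred. So any PPT adversary wins with probability at most $\eta_1+\negl(\lambda)=\negl(\lambda)$.

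The main obstacle I expect is the bookkeeping of client misbehaviour: I must verify that every way a malicious device can make the commitments, the electronic ciphertexts, and the two printed pages mutually inconsistent is either detected by the honest voter at \textit{VoterVerify}, or lands the ballot in $\mathcal{B}^{\text{rejected}}$ or outside $\mathcal{L}^{\text{tally}}$ (so it is absorbed by the $\varepsilon<d$ acceptance bound), or leaves the recorded vote equal to the intended one. The linchpin is soundness of the two PETs of Steps~\ref{Step:ECVotePEP} and~\ref{Step:ECMACPEP} together with honesty of the EC's re-randomisation and double-envelope handling, since these are what force paper-vote $=$ electronic-committed-vote $=$ accepted-vote; pinning down $\mathcal{E}$ so that $f$ and $\mathcal{O}$ provably differ by at most $\varepsilon$, and carefully invoking the clash-resistance and WBB-authentication assumptions, are the remaining fiddly points.
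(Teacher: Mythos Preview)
Your proposal follows essentially the same approach as the paper: define the same vote extractor, then argue that for every honest voter whose \textit{VoterVerify} passes, honesty of the post and EC carries the checked plaintext on $\Paper_1$ faithfully into $\mathcal{B}^{\text{received}}$, after which soundness of the PET at Step~\ref{Step:ECVotePEP} forces the tallied vote to equal it. The paper organises the same reasoning as an explicit case split over the ways a malicious client can malform a ballot (invalid $\PrfKnow$, invalid commitment openings, everything else), whereas you do a forward trace; both land on the same per-voter conclusion.

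Two points worth tightening. First, your invocation of Pedersen binding is superfluous in this direction: the client chose both $(c_a,c_b)$ and the opening, so binding buys nothing. What actually pins the tallied vote to $\mathcal{U}_l$ is the PET of Step~\ref{Step:ECVotePEP} alone, which compares the (honestly encrypted, honestly mixed) paper vote to the committed $e_\Vote$; the MAC PET and the commitment opening are there to catch a cheating EC, not a cheating client. The paper's Case~3/Case~4 analysis makes this dependency clearer.

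Second, your counting paragraph conflates two distinct populations. When you write ``for a non-verifying voter whose ballot does reach $\mathcal{B}^{\text{accepted}}$ \ldots equals both $\mathcal{E}$'s output and the paper record, so it contributes nothing to $d_1$'', that is only true for \emph{corrupt} voters, because $\mathcal{E}$ is applied to $\mathcal{V}\setminus\tilde{\mathcal{V}}$. For an \emph{honest} voter in $\tilde{\mathcal{V}}$ who does not run \textit{VoterVerify}, $f$ uses the intended selection $\mathcal{U}_l$, while $\mathcal{O}$ uses whatever the malicious client printed on $\Paper_1$; these can differ and do contribute to $d_1$. The paper accounts for this separately via the bound $|\mathcal{V}|-\theta = M-d$ on the number of such voters, rather than through the $\varepsilon<d$ acceptance condition you invoke. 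Your $\varepsilon$-based bound covers ballots that fall out of $\mathcal{L}^{\text{tally}}$, not ballots that are accepted with a substituted paper vote; you need both pieces.
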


The definition of the game and proof of the Theorem are given in Appendix~\ref{app:clientVerif}.

\section{Implementation and pilot}\label{sec:implementation}
We implemented a prototype in Rust.
The ElGamal cryptosystem was implemented over the prime-order Ristretto subgroup of Curve25519 using \texttt{curve25519-dalek}~\cite{curve25519dalek}.

For the shuffle, we re-implemented a variant of the Verificatum shuffle~\cite{verificatum2018} where each row contains multiple ciphertexts, based on the presentation in~\cite{haenni2017pseudo}. The implementation was tested for efficiency, as a real-world system may need to handle millions of votes. Using an Intel i7-10750H ;mobile CPU to run a shuffle on 100000 rows, each with 6 ciphertexts, the code was able to generate a shuffle and corresponding proof in 38.34 seconds, and was able to verify this proof in 26.43 seconds. Practicality is a key benefit of our protocol---for $n$ votes, we require only $O(n)$ PEPs, and the shuffle proof requires only $O(n)$ elliptic curve additions and multiplications.

A real-world pilot of the protocol was run with three trustees
A small number of volunteers were asked to rank candidates Alice, Bob, and Eve---the example votes from Section~\ref{subsec:voting} were taken from this pilot.  Seven ranked-choice votes were submitted and physically mailed to one of the authors acting in the role of EC. Five of them were scanned and tallied. The other two were (unintentionally) lost due to human error in the testing process, which gave us a nice opportunity to test our basic verification steps. The verification protocol successfully verified four of the votes and revealed that one was missing. The other two voters did not use the verification protocol to check their votes.
Although this is nothing close to a full usability study or realistic test, it shows that our system is complete and that simple failures can be detected.

\section{Limitations and possible enhancements} \label{sec:extensions} \label{sec:limitations}
\paragraph{Trust in the Electoral Commission for privacy}
Our protocol trusts the EC to forget the link between $\vid$ and the encrypted $\vid$ it generates.
In traditional postal voting there is analogous trust in the Electoral Commission to not misbehave, for example, by opening both the inner and outer envelope at the same time. However, this can be observed by scrutineers without infringing on the secrecy of the ballot, and is therefore a relatively low, observable, risk. \ifcompressing{}{As such, the trust assumption is restricted to an Electoral Commission covertly opening ballots, inspecting them, and then resealing them prior to counting, which cannot be discounted entirely, but is an attack that is both more complicated and more risky.}
Trusting an electronic component not to leak is much more problematic.

\paragraph{Non-collusion between EC and client}
Another important limitation is the assumption that the client and the EC are not both compromised.  In an ideal world, people would download and compile an open-source voting app from an independent entity they trusted.  In practice voters generally get their voting instructions and software from the same EC that will be receiving their votes.  (Two of the authors worked on an end-to-end verifiable e-voting project in which the electoral authority refused to issue any cast-as-intended verification instructions at all.)  This is an important practical question for the true security of our scheme.  However, most verifiable e-voting systems suffer from some version of the same problem, hence the huge motivation to fall back to plain paper mail.

\paragraph{Revealing which voters' MAC matched}
The current version of this protocol reveals  $\vid$s during the MAC matching process.  This produces public information about who cast a valid vote and who didn't.  The protocol could be altered to hide this information, thus making it secret which MACs matched, though the total numbers would be public.  This would change the verifiability property from an individual to a group one: voters would not be able to tell whether their own vote had been dropped, though everyone would be able to see the total numbers of dropped and invalid votes.
\VTNote{This needs a fair bit more thought, because there's a big difference between seeing the invalid vote you intended to cast fail matching, and seeing someone else's vote fail matching when you didn't know whether it was intended to be valid.  The point is, however, that it only matters for dropping, not for substituting, votes: substitution is possible if the client leaks a,b and impossible (with non-negl prob) otherwise, regardless of how we organise the matching.}

\paragraph{Trusting the client for privacy}
\EMNote{TODO: Talk about other ways to decouple/couple paper 1 and 2.}
The client is trusted for privacy: although a client controlled by the voter can lie to a coercer, a client controlled by the coercer knows which MAC was submitted and hence which vote was sent.

However, the MAC generation, vote encryption, and data upload (to the EC) steps of \textit{Cast}
do not all need to be performed by the same device.  One device could generate ciphertexts (without knowing which were uploaded), and a different device could upload them (without knowing their contents).  This has significant advantages for privacy against the client, effectively splitting the information about how the person voted between two devices.  \ifcompressing{}{A malicious client would not know which vote had been submitted.}

Also, since the voter's ability to defend against a cheating EC relies on the client to keep the random $a$ and $b$ values secret, it would be beneficial to expand the protocol so that $a$ and $b$ were generated in a distributed way by multiple devices.  \ifcompressing{}{However, this significantly complicates the user experience.}

\VTNote{Maybe just leave Xavier's improved/edited version for a later paper.}

\paragraph{Accountability}
The current protocol emphasises verifiability but does not attempt to provide accountability or defence against voters who maliciously claim that there was a problem when there was none.  The most obvious attack is to post fake votes with someone else's VoterID.  This will not forge a vote, but it will appear as an indication of a problem.  Although this isn't a verifiability failure, it would  cause votes to be cancelled (detectably) and to give an impression of fraud. A standard authentication mechanism would address this.  For example, voters could be given a secret nonce without which an upload for their VoterID would not be accepted to the WBB.

\paragraph{WBB}
We also make strong assumptions on the existence and security properties of a web bulletin board.  In practice, this would be implemented in some specific way involving either threshold trust on a set of peers, or a final verification step requiring some extra work from each voter.

\section{Conclusion}
We provide a significant step forward in the design of verifiable remote voting protocols.  Our protocol combines very simple cast-as-intended verification of a plaintext printout with strong guarantees of verifiability for those observers who choose to check.  We also provide a (passive, honest-but-remembering) version of receipt-freeness against an attacker who cannot tap either of the voter's communication channels.  This is the first work to provide all these advantages.

For politically binding elections, trusting the Electoral Commission for privacy, and trusting it not to collude with the client for verifiability, may be unacceptable. However, for non-political elections, where independent scrutineers may not be present and possibly even double envelopes are not used, the trust assumptions may be equivalent to what they are in typical postal voting. In such scenarios, the additional benefit of verifiability remains valuable. \VTNote{pithier ending please}
\newpage
\bibliographystyle{plain}

\bibliography{bibliography.bib}

\appendix

\section{Proof of Theorem~\ref{thm:Privacy}}\label{app:privacyProof}
We prove Theorem~\ref{thm:Privacy} (vote privacy) assuming a threshold of trustees is honest.

\paragraph{Intuition} The intuition is that we are going to replace all of the ciphertexts, except those directly submitted by the adversary, with encryptions of nonsense without the adversary noticing.  Specifically in game 3, the adversary's view during \textit{Cast} with respect to the honest voters is going to be as follows plus some random ciphertexts and simulated proofs:
\begin{subequations}
	\begin{align} 
	(\vid, c_a, c_b) \\
	\vid \\
	\vid
	\end{align}
\end{subequations}
During \textit{Tally} the adversary will see:
\begin{subequations}
	\begin{align}
	\Vote \\
	\ReceivedVote \\
	((a,b, r_a, r_b), \receivedvid) \\
	g^\Vote
	\end{align}
\end{subequations}
The mixing that EC performs between (3c) and (4a), between (4b) and (4c), and also between (4c) and (4d) hides the order in which the votes were submitted, preventing the vote from being trivially matched with the voter ID.

\begin{proof}
Define the advantage between game $G_i$ and $G_j$ to be
$$\mathsf{Adv}_{G_i,G_j}(\mathcal{A}) := \frac{1}{2}\big\vert Pr[\mathcal{A}=1| G_i]-Pr[\mathcal{A}=1| G_j]\big\vert$$
Consider the following sequence of games.
\begin{description}
	\item[Game] $G_0$: The actual game $G^{\mathcal{A}}_{priv}(1^\lambda,n,m)$.	By definition $Adv_{G_0,G^{\mathcal{A}}_{priv}(1^\lambda,n,m)}(\mathcal{A}) = 0$.
	
	\item[Game] $G_1$: Let $G_1$ be the same as Game $G_0$ except that all PETs and decryptions are simulated using knowledge of the plaintext rather than decryption keys. This is allowable since all ciphertexts being decrypted or tested for plaintext equivalence are either produced by the challenger, or they are produced by the adversary but are accompanied by zero-knowledge proofs of knowledge (and therefore the challenger can extract them with the zero-knowledge extractor).  By the soundness properties of the zero-knowledge proof of knowledge proofs $Adv_{G_1,G_0}(\mathcal{A}) = \negl(\lambda)$.
	
	\item[Game] $G_2$: Let $G_2$ be the same as Game $G_1$ except all the ZKPs used to demonstrate correct mixing, correct decryption, and correct PETs that the challenger performs are simulated via the zero-knowledge simulator. At this point, the challenger no longer uses the secret key corresponding to $pk$ for any purpose. The mixing must be simulated to avoid leaking information as to the permutation or randomness used.\EMNote{Boneh \& Shoup Thm 19.10 proves this for equality of discrete logs.}\THNote{Ballot independence is effectively utilised in games 1 and games 2 without being explicitly mentioned.  More technically we utilise the fact that the adversary knows the votes contained in the ciphertexts she submitted.}
	Since the proofs of knowledge are non-malleable and the EC filters for duplicates, none of the adversary's proofs depend on the simulated proofs; we can thus continue to use the extractor on these proofs. By their zero-knowledge properties, $Adv_{G_2,G_1}(\mathcal{A}) = 0$.
	
	\item[Game] $G_3$: Let $G_3$ be the same as Game $G_2$ except that all the values to be encrypted are replaced by random values from an oracle and all values to be re-encrypted are replaced with fresh encryptions of random values. Decryption and plaintext equivalence is always simulated, so we never provide a decryption oracle; this allows us to rely on the IND-CPA property of ElGamal, guaranteeing that $Adv_{G_3,G_2}(\mathcal{A}) = \negl(\lambda)$.
\end{description}

In Game $G_3$, the ciphertexts and proofs contain random values with the exception of the (decoupled) $\vid$s and $\Vote$s.  The $\vid$s and $\Vote$ are decoupled as a result of the mixing which applies a random permutation to these lists---this is secret assuming the honesty of at least one mixing trustee and the honesty of $n - k + 1$ decrypting trustees (for $k$-out-of-$n$ secret sharing).  Recall that the second criterion of the game says that the adversary loses if the set of honest votes leaks $b$.  Therefore, the adversary cannot have any advantage in winning $G_3$. Following the chain of games yields
$$Adv_{G_3, G^{\mathcal{A}}_{priv}(1^\lambda,n,m)}=\negl(\lambda)$$

so $\mathcal{A}$'s advantage in $G^{\mathcal{A}}_{priv}(1^\lambda,n,m)$ is negligible.
\end{proof}

We note that it is possible to trust the voter's device less and the EC more by changing the adversary's view and the privacy proof follows in much the same manner.  This change only affects the definition in point 3; the adversary is allowed to observe $a$ and $b$ but not $\Paper_2$.

\section{Definition and proof of honest-but-remembering Receipt Freeness}\label{app:RF}

Informally, the adversary is attempting to coerce a voter $V_l$ into submitting a vote for candidate selection $\mathcal{U}^0_l$.
The game is defined as follows.

\begin{definition}[Honest-but-remembering Receipt-freeness Game]
	We denote the game by $G^{\mathcal{A},\mathcal{S}}_{RecFree}(1^\lambda,n,m)$.
	\begin{enumerate}
		\item $\mathcal{A}$ on input $1^\lambda,n,m,$ chooses a list of candidates $\mathcal{P} = \{P_1,...,P_m\}$, a set of voters $\V = \{V_1,...,V_n\}$, and the set of allowed candidate selections $\mathcal{U}$.  It provides $\mathcal{C}$ the sets $\mathcal{P},\mathcal{V},$ and $\mathcal{U}$.
		\item $\mathcal{C}$ flips a coin $b \in \{0,1\}$ and performs the \textit{Setup} protocol on input $(1^\lambda,\mathcal{P},\mathcal{V},\mathcal{U})$ to obtain $sk,(G,g,q,pk)$, providing $\mathcal{A}$ with $(G,g,q,pk)$.
		\item 
		The adversary $\mathcal{A}$ and the challenger $\mathcal{C}$ engage in an interaction where $\mathcal{A}$ schedules \textit{Cast} protocols of all voters which may run concurrently.  For each voter $V_l \in \V$, the adversary chooses whether $V_l$ is corrupted:
		\begin{itemize}
			\item If $V_l$ is corrupted, they engage in a \textit{Cast} protocol where $\mathcal{A}$ plays the role of $V_l$ and $\mathcal{C}$ plays the role of $EC$.
			\item If $V_l$ is not corrupted, $\mathcal{A}$ provides two candidate selections $(\mathcal{U}^0_l,\mathcal{U}^1_l)$ to the challenger $\mathcal{C}$.  They must do so such that $f(\langle\mathcal{U}_l^0\rangle_{V_l\in \HV})=f(\langle\mathcal{U}^1_l\rangle_{V_l\in \HV})$ where $\HV$ is the set of honest voters (that is, the election result w.r.t. the honest voters does not leak $b$).
			
			\hspace{0.25in}$\mathcal{C}$ operates on $V_l$'s behalf, using $\mathcal{U}^b_l$ as the voter $V_l$'s input.  \emph{The adversary is allowed to observe $\Wbb$ only,} where $\mathcal{C}$ plays the role of $V_l$ and the $EC$.  When the \textit{Cast} protocol terminates, the challenger $\mathcal{C}$ provides to $\mathcal{A}$:
			\begin{enumerate}
				\item the receipt consisting of the $\vid$ for voter $V_l$, and 
				\item if $b=0$, the current view, $\text{view}_l = (a,b,r_a,r_b,\Vote,\Mac,\{g^\Vote\}_{pk},\{g^\Mac\}_{pk},$\newline$
				\{a\}_{pk},\{b\}_{pk},\{r_a\}_{pk},\{r_b\}_{pk},\{\vid\}_{pk})$, of the voter $V_l$ that the challenger obtains from the $\textit{Cast}$ execution. If $b=1$, the challenger instead provides a simulated view of the internal state of $V_l$ produced by $\mathcal{S}(\text{view}_l)$.\footnote{Intuitively, if $b=0$ the voter honestly gives its view to the adversary. If instead $b=1$ the voter simulates a fake view and gives that to the adversary, voting however they please.}
				
			\end{enumerate}
		\end{itemize}
		\item $\mathcal{C}$ performs the \textit{Tally} protocol playing the role of the election trustees $ET$.  $\mathcal{A}$ is allowed to observe the $\Wbb$.
		\item Finally, $\mathcal{A}$ uses all the information collected above (including the contents of the $\Wbb$) to output a bit $b^*$.
	\end{enumerate}
	Denote the set of corrupted voters as $\CV$ and the set of honest voters as $\HV = \V \setminus \CV$.  The game returns a bit which is 1 if and only if $b = b^*$.
	
	We say that a voting scheme achieves receipt-freeness if there is a PPT voter simulator $\mathcal{S}$ such that for any PPT adversary $\mathcal{A}$:
	\begin{align*}
	\left\vert Pr[G^{\mathcal{A},\mathcal{S}}_{RecFree}(1^\lambda,n,m) =1]-1/2\right\vert  = \negl(\lambda).
	\end{align*}
\end{definition}

The proof of receipt-freeness relies on defining a coercion-resistance strategy in which the voter tells the truth about the secret values $(a,b)$ it has committed to, but lies about the vote and then claims a MAC corresponding to the claimed vote and the truthful $(a,b)$.  We show that for a receipt-freeness adversary this is indistinguishable from obedience.

Since we do not rely on secrecy of  the $a,b$ values for receipt freeness, we can use an intermediate game in which $a$ and $b$ are known to the EC.  

We now restate and prove the main theorem.

\begin{theorem}
	For any constant $m \in \mathbb{N}$ and $n=poly(\lambda)$, the e-voting system described in section 2 has honest-but-remembering receipt freeness with respect to the game $G^{\mathcal{A},\mathcal{S}}_{RecFree}(1^\lambda,n,m)$
\end{theorem}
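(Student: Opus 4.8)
The plan is to structure the proof as a sequence of game hops, mirroring the privacy proof of Theorem~\ref{thm:Privacy} but with the weaker adversary (sees only $\Wbb$, no channel tapping, no EC collusion) and the crucial extra object: the (possibly simulated) voter view $\text{view}_l$ handed to $\mathcal{A}$. First I would exhibit the voter simulator $\mathcal{S}$: on input $\text{view}_l$ for the honestly-cast selection $\mathcal{U}^1_l$, it outputs the same $(a,b,r_a,r_b)$ (these are genuinely the committed values, so they must be revealed truthfully to be consistent with $c_a,c_b$ on the WBB), but replaces the claimed vote by $\mathcal{U}^0_l$, the claimed MAC by $\Mac_{\text{cr}} = a\cdot\mathcal{U}^0_l + b \bmod q$, and the ciphertexts $\{g^\Vote\}_{pk}, \{g^\Mac\}_{pk}$ and the encrypted openings by \emph{fresh} encryptions of the corresponding claimed plaintexts. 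The goal is to show the real view ($b=0$, honest selection actually cast) and the simulated view ($b=1$, $\mathcal{U}^1_l$ cast but a fake view for $\mathcal{U}^0_l$ produced) are indistinguishable.

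Second, I would run the game hops. $G_0$ is the real game. $G_1$: simulate all trustee ZKPs (mix, decryption, PEPs) and stop using $sk$; since proofs of knowledge from $\mathcal{A}$ are non-malleable and duplicates are filtered, extraction still works, and the hop is justified by zero-knowledge — cost $\negl(\lambda)$. $G_2$: use the intermediate game flagged in the excerpt in which the EC knows $a,b$ (legitimate precisely because receipt-freeness does \emph{not} rely on secrecy of $a,b$); this changes nothing in $\mathcal{A}$'s view since $\mathcal{A}$ does not see the EC's internal state. $G_3$: replace every ciphertext not submitted directly by $\mathcal{A}$ — the encrypted vote/MAC that the EC re-randomises and posts, and the encrypted commitment openings on $\Paper_1$ — with encryptions of random values, justified by IND-CPA of ElGamal (no decryption oracle is ever needed because decryption and PEPs are simulated). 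The $\rerand$ step performed by the EC is essential here: what lands on $\Wbb$ is a re-randomised ciphertext whose randomness the client (hence the coercer, via the view) does not know, so it is indistinguishable from a re-randomisation of $\{g^{\Mac_{\text{cr}}}\}_{pk}$.

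Third, in $G_3$ I would argue the adversary's advantage is zero. After the replacements, everything $\mathcal{A}$ sees on $\Wbb$ that could depend on $b$ is: the committed $c_a, c_b$ (perfectly hiding — independent of which vote was cast, and consistent with the truthfully-revealed $(a,b,r_a,r_b)$ in both worlds), the plaintext $\vid$, the plaintext $\Vote$ appearing after the EC's mixes in \textit{Process vote} and \textit{Tally}, and the final decrypted tally. The mixes sever the link between $\vid$ and $\Vote$, so the only vote-dependent public quantity is the multiset of tallied votes — but the game's side condition $f(\langle\mathcal{U}^0_l\rangle_{V_l\in\HV}) = f(\langle\mathcal{U}^1_l\rangle_{V_l\in\HV})$ forces this multiset to be identical for $b=0$ and $b=1$. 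The handed-over view is, in $G_3$, a tuple of the truthful $(a,b,r_a,r_b)$, a claimed vote/MAC pair satisfying the line equation, and random ciphertexts with simulated proofs of knowledge — a distribution $\mathcal{S}$ reproduces exactly regardless of $b$. Hence $\Pr[\mathcal{A}=1\mid G_3] = 1/2$, and summing the hops gives $|\Pr[G^{\mathcal{A},\mathcal{S}}_{RecFree}=1] - 1/2| = \negl(\lambda)$.

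I expect the main obstacle to be the $G_3$ hop concerning the view itself rather than the WBB: one must check carefully that in the real world the view contains \emph{genuine} ciphertexts $\{g^\Vote\}_{pk},\{g^\Mac\}_{pk},\{a\}_{pk},\ldots$ with \emph{genuine} (extractable, hence correctly distributed) proofs of knowledge, while $\mathcal{S}$ only has fresh encryptions with simulated proofs — so the indistinguishability must be reduced to IND-CPA jointly with the zero-knowledge/simulatability of $\PrfKnow$, and one must confirm that the EC never needs to decrypt these particular ciphertexts (it only re-randomises and, in the intermediate game, already knows $a,b$), so no decryption oracle sneaks in. A secondary subtlety is ensuring the non-malleability argument of $G_1$ still applies once the EC filters duplicates, so that $\mathcal{A}$'s own proofs of knowledge (for corrupted voters) remain extractable after trustee proofs are simulated; this is the same point invoked in the privacy proof and carries over verbatim.
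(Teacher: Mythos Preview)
Your proposal is correct and rests on the same core idea as the paper: the simulator truthfully discloses $(a,b,r_a,r_b)$ (so it remains consistent with $c_a,c_b$ on the WBB) but lies about the vote and MAC, and indistinguishability comes down to IND-CPA of the EC's re-randomised ciphertexts together with simulation of the trustee ZKPs.

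The decomposition differs. The paper defines games $G_1$/$G_2$ exactly as in the privacy proof (simulate decryptions/PEPs via plaintext knowledge, then simulate mix/decryption/PEP proofs and replace mix outputs with fresh encryptions), but then in $G_3$ makes a \emph{targeted} swap rather than wholesale randomisation: when $b=1$, the EC posts an encryption of the \emph{claimed} $\Mac'$ in Step~\ref{Step:ECPostsVoteMAC} of \textit{Cast} and permutes the encrypted $\receivedvid$/params in $\mathcal{B}^{\text{received}}$ so that each voter's data sits next to the vote they claimed. This makes the transcript and view identical to what an obedient voter would produce, so $G_4$ (``always obey, ignore $b$'') follows with zero advantage. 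Your route instead pushes everything to random ciphertexts and argues the resulting game is independent of $b$ via the side condition on $f$ and the mixes---closer to the privacy proof's final step. Both work; the paper's version makes the coercion-resistance strategy more visually explicit, while yours reuses more machinery.

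Two small points. Your $G_2$ (``EC knows $a,b$'') is mentioned in the paper only as intuition, not as a hop, and indeed it is a no-op since $\mathcal{A}$ never sees EC internals; you can drop it. Also, the view $\text{view}_l$ as defined in the game does \emph{not} include the $\PrfKnow$ proofs (only the ciphertexts), so the obstacle you anticipate about simulating proofs of knowledge inside the view does not arise---the view ciphertexts for $b=0$ and $b=1$ already encrypt the same plaintexts (the claimed ones), and since the encryption randomness is not part of $\text{view}_l$, their marginal distributions coincide even before your randomisation step.
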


\begin{proof}
	We briefly recap the information visible to the adversary.
	During \textit{Cast} the adversary sees
	\begin{subequations}
		\begin{align*} 
		(\vid, c_a, c_b) \\
		(\vid, \rerand\{g^\Mac\}_{pk}, \rerand\{g^\Vote\}_{pk})
		\end{align*}
	\end{subequations}
	
	The main difference from the privacy game is that the adversary may demand the voter's view, including secret information. After \textit{Cast}, $\mathcal{A}$ sees the possibly-simulated view
	\begin{multline*}
	(\vid,a,b,r_a,r_b,\Vote,\Mac,\{g^\Vote\}_{pk},\{g^\Mac\}_{pk},\\ 
	\{a\}_{pk},\{b\}_{pk},\{r_a\}_{pk},\{r_b\}_{pk},\{\vid\}_{pk})
	\end{multline*}
	
	During \textit{Tally} the adversary sees 
	\begin{subequations}
		\begin{align*}
		(\ReceivedVote, \rerand\{\receivedvid\}_{pk}, \hspace{3cm}\\
		           \rerand\{
		\{a\}_{pk},\{b\}_{pk},\{r_a\}_{pk},\{r_b\}_{pk} \}) \\
		(\{\ReceivedVote\}_{pk}, (a,b, r_a, r_b), \receivedvid, \text{deryption proof})\\
		\Dec_{\mathbb{G}, pk}(\{g^\Vote\}_{pk})
		\end{align*}
	\end{subequations}
	
	\paragraph{Defining the simulator}
	The simulator $\mathcal{S}$ for each honest voter $V_l$ receives the voter's view (including candidate selections $(\mathcal{U}_l^0,\mathcal{U}_l^1)$ and randomness for all the encryptions)
	\begin{multline*}
	(\vid,a,b,r_a,r_b,\Vote=\mathcal{U}_l^1,\Mac,\{g^\Vote\}_{pk},\\ \{g^\Mac\}_{pk}, 
	\{a\}_{pk},\{b\}_{pk},\{r_a\}_{pk},\{r_b\}_{pk},\{\vid\}_{pk})
	\end{multline*}
	Then $\mathcal{S}$  outputs the fake view
	\begin{multline*}
	(\vid,a,b,r_a,r_b,\Vote'=\mathcal{U}_l^{0},\Mac' = a\cdot\Vote' + b, \{g^{\Vote'}\}_{pk}\\ \{g^{\Mac'}\}_{pk},
	\{a\}_{pk},\{b\}_{pk},\{r_a\}_{pk},\{r_b\}_{pk},\{\vid\}_{pk})
	\end{multline*}
	
	Define the advantage between game $G_i$ and $G_j$ to be
	$$\mathsf{Adv}_{G_i,G_j}(\mathcal{A}) := \frac{1}{2}\big\vert Pr[\mathcal{A}=1| G_i]-Pr[\mathcal{A}=1| G_j]\big\vert$$
	
	Consider the following sequences of games. 
	\begin{description}
		\item[Game] $G_0$: The actual game $G^{\mathcal{A},\mathcal{S}}_{RecFree}(1^\lambda,n,m)$, where the challenger uses $\mathcal{U}^b_l$ in the \textit{Cast} protocol and the above simulator is invoked when $b=1$.  (That is, voters vote as they wish and run the coercion-resistance strategy.)
		
		By definition $Adv_{G_0,G^{\mathcal{A},\mathcal{S}}_{RecFree}(1^\lambda,n,m)}(\mathcal{A}) = 0$.
		
		\item[Game] $G_1$: The same as Game $G_0$, except the decryptions and plaintext equivalence tests are simulated with knowledge of the plaintext as in Theorem~\ref{thm:Privacy}; $Adv_{G_1, G_0}(\mathcal{A})=\negl(\lambda)$.
		
		\item[Game] $G_2$: The same as Game $G_1$, except the proofs used to demonstrate correct decryption, plaintext equivalence, and correct mixing are simulated with their zero-knowledge simulators as in Theorem~\ref{thm:Privacy}. The challenger replaces the re-randomised ciphertexts from the mixing with fresh encryptions to ensure the link is destroyed. We have $Adv_{G_2, G_1}(\mathcal{A})=\negl(\lambda)$.
		
		\item[Game] $G_3$: The same as Game $G_2$, except when $b = 1$:
		\begin{enumerate}
			\item In Step~\ref{Step:ECPostsVoteMAC} of \textit{Cast}, the challenger posts an encryption of the claimed MAC, $\{g^{\Mac'}\}_{pk}$, instead of a re-randomised encryption of the actual MAC $\{g^\Mac\}_{pk}$.
			\item In Step~\ref{Step:ECPostsPaperInfo} of \textit{Process Vote}, the challenger changes the posted (re-randomised) encryptions of $\receivedvid$ and $a, b, r_a, r_b$ so that they appear together with the votes they claimed to have cast.
		\end{enumerate}
		\textit{Tally} can then proceed as usual; we have changed the votes and MACs consistently so that they are still plaintext-equivalent. Since all we have done is change encryptions for which the adversary does not know the randomness and the mixing breaks the link between successive encrypted votes, the IND-CPA property of ElGamal yields $Adv_{G_3, G_2}(\mathcal{A})=\negl(\lambda)$.
		
		\item[Game] $G_4$: The same as Game $G_3$, except the challenger (acting as the honest voters) ignores the value of $b$ and always obeys the adversary. Since the adversary does not see anything different to what it saw in Game $G_3$, $Adv_{G_4, G_3}(\mathcal{A})=0$.
	\end{description}
	The adversary has no advantage in Game $G_4$ because the value of $b$ is ignored. Following the chain of games then yields
	$$Adv_{G_4, G^{\mathcal{A},\mathcal{S}}_{RecFree}(1^\lambda,n,m)}=\negl(\lambda)$$
	so $\mathcal{A}$'s advantage in $G^{\mathcal{A},\mathcal{S}}_{RecFree}(1^\lambda,n,m)$ is negligible.
\end{proof}

\section{Client Verifiability definition and proof} \label{app:clientVerif}

\begin{definition}[Client Verifiability Game (after \cite{kiayias2015end})]
	We denote the game by $G^{\mathcal{A},\mathcal{E},d,\theta}_{C-Ver}(1^\lambda, m,n)$ for a vote extractor algorithm $\mathcal{E}$ (which may be super-polynomial).
	
	\begin{enumerate}
		\item $\mathcal{A}$ on input $1^\lambda,n,m,$ chooses a list of candidates $\mathcal{P} = \{P_1,...,P_m\}$, a set of voters $\V = \{V_1,...,V_n\}$, and the set of allowed candidate selections $\mathcal{U}$.  It provides $\mathcal{C}$ the sets $\mathcal{P},\mathcal{V},$ and $\mathcal{U}$.
		
		\item $\mathcal{A}$ performs the \textit{Setup} protocol on input $(1^\lambda,\mathcal{P},\mathcal{V},\mathcal{U})$ to obtain $sk,(G,g,q,pk)$, providing $\mathcal{C}$ with $(G,g,q,pk)$.
		
		\item
		The adversary $\mathcal{A}$ and the challenger $\mathcal{C}$ engage in an interaction where $\mathcal{A}$ schedules \textit{Cast} protocols of all voters which may run concurrently.  For each voter $V_l \in \V$, $\mathcal{A}$ can either completely control the voter or allow $\mathcal{C}$ to operate on their behalf, in which case $\mathcal{A}$ provides a candidate selection $\mathcal{U}_l$ to $\mathcal{C}$.  Then, $\mathcal{C}$ engages with the adversary $\mathcal{A}$ in the Cast protocol so that $\mathcal{A}$ plays the role of the voting client.  The postal system and the EC execute honestly.  If the protocol terminates successfully, $\mathcal{C}$ obtains the receipt $\vid$ on behalf of $V_l$.
		\VTNote{We need to be careful about what we're calling a receipt here.  We don't really have encrypted receipts in the usual E2E way here, but we don't want it to look as if your $\vid$ is generated when you vote, or we are subject to clash attacks.  Your $\vid$ needs to be demonstrably not the same as someone else's.  It may be there's no receipt as such, just an opportunity to observe your row of the WBB.}
		
		Let $\tilde{\mathcal{V}}$ be the set of honest voters (i.e. those controlled by $\mathcal{C}$) that terminated successfully.
		
		\item Finally, the (honest) EC posts the election transcript $\tau$ to the WBB.
	\end{enumerate}
	
	The game returns a bit which is 1 iff the following conditions are true:
	
	\begin{enumerate}
		\item $\Big|\left\{l\in [n]\ |\ \textit{VoterVerify}(\alpha_l)\text{ passes}\right\}\Big| \geq \theta$ (i.e. at least $\theta$ honest voters verified successfully);
		\item $\Result(\tau, \mathcal{O}) \neq \bot$; and
		\item for the metric $d_1$ and election outcome function $f$:
		$$d_1(\Result(\tau, \mathcal{O}), f(\langle\mathcal{U}_1,\ldots,\mathcal{U}_n\rangle)) > d$$
		where $\{\mathcal{U}_l\}_{V_l \in \mathcal{V} \setminus \tilde{\mathcal{V}}} \leftarrow \mathcal{E}(\tau, \{\alpha_l \}_{V_l \in \tilde{\mathcal{V}}})$
		(That is, the deviation from the true result is larger than the accepted error $d$.)
	\end{enumerate}
	
	We say that a voting scheme achieves client verifiability if for any PPT adversary $\mathcal{A}$:
	\begin{align*}
	Pr\left[G^{\mathcal{A},\mathcal{E},d,\theta}_{Client-ver}(1^\lambda,n,m) =1\right]  = \negl(\lambda).
	\end{align*}
\end{definition}

\begin{theorem} \label{thm:ECVerifiability}
	For any constant $m\in\mathbb{N}$ and $n=poly(\lambda)$, a specified result function $\Result(\tau, \mathcal{O})$ defining a threshold $0 \leq d < M$ for an election with margin $M$, and $\theta=|\mathcal{V}|-(M-d)$, the simplified ZKP-based version of the protocol satisfies client verifiability.
	
\end{theorem}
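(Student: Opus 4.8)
The plan is to follow the same backward‑walk structure as the cheating‑EC proof, but to exploit the fact that here the postal channel and the EC are honest, so the \emph{paper} outcome $\mathcal{O}$ on which $\Result$ depends is reliable and the cryptographic layer can essentially be taken for granted. First I would fix the vote extractor $\mathcal{E}$: given the transcript $\tau$, for every voter outside $\tilde{\mathcal{V}}$ (i.e. fully adversarial, or honest but not terminated) it reads off, from the honestly‑decrypted $\mathcal{B}^{\text{received}}$/$\mathcal{B}^{\text{mixed}}$ entries on the WBB, the selection on the paper ballot carrying that voter's $\vid$ (the link $\receivedvid\mapsto\ReceivedVote$ is public because the honest EC performs the mix and decryption during \textit{Tally votes}); if no received paper ballot carries that $\vid$, $\mathcal{E}$ outputs the blank/null selection. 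Note $\mathcal{E}$ is allowed to be super‑polynomial but here it need not be, since the EC publishes exactly the plaintexts $\mathcal{E}$ wants.

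Next I would argue that, conditioned on $\Result(\tau,\mathcal{O})\neq\bot$ (otherwise the adversary already loses condition~2), the accepted outcome equals $\mathcal{O}$, the tally of precisely those paper ballots that passed traditional postal acceptance, and that $\mathcal{O}$ agrees with the ideal result $f(\langle\mathcal{U}_1,\dots,\mathcal{U}_n\rangle)$ on every voter except possibly the honest voters who did not run \textit{VoterVerify}. For an honest voter $\mathcal{V}_l$ who verified: she checked that the plaintext on $\Paper_1$ is her intended $\mathcal{U}_l$ and that the $\vid$ on $\Paper_2$ is hers; since the post and EC are honest this $\Paper_1$ is received, passes acceptance (her $\vid$ is on the roll), and is counted in $\mathcal{O}$ with selection $\mathcal{U}_l$. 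A malicious client can still misbehave on the \emph{electronic} side for such a voter (publish commitments $c_a,c_b$ it cannot open consistently, publish an $e_\Param$ that decrypts to a non‑opening, send a bogus encrypted $\Mac$), but the only effect is that $\mathcal{V}_l$'s $\vid$ fails an opening or PET check in \textit{Tally votes} and so lands in $\mathcal{L}^{\text{received}}\setminus\mathcal{L}^{\text{tally}}$, which raises $\varepsilon$ without ever changing $\mathcal{O}$. For a voter outside $\tilde{\mathcal{V}}$, $\mathcal{E}$ by construction reports exactly the selection that appears on her received paper ballot (or blank), so she contributes identically to $\mathcal{O}$ and to $f$. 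Hence $d_1(\mathcal{O},f)$ is controlled by (a fixed multiple of) the number of honest voters who did not verify.

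The third step is the accounting. At least $\theta$ voters verified successfully, so at most $|\mathcal{V}|-\theta = M-d$ honest voters did not, and these are the only voters on which $\mathcal{O}$ and $f$ may disagree; with the margin $M$ and policy threshold $d$ the choice $\theta=|\mathcal{V}|-(M-d)$ is exactly what forces $d_1(\Result(\tau,\mathcal{O}),f(\langle\mathcal{U}_1,\dots,\mathcal{U}_n\rangle))\le d$ whenever $\Result\neq\bot$, so condition~3 of the game fails. The only place a genuine probability (rather than a certainty) enters is a handful of corner cases: the perfectly hiding/binding property of the Pedersen commitments rules out a client manufacturing a second "correct opening'' for some $\vid$ (negligible under the discrete‑log assumption used to pick $h_1,h_2$), and, if one keeps the proof machinery rather than idealising it, the soundness of the $\PrfKnow$ proofs on $\Paper_1$ together with the mix/decryption proofs the (honest) EC publishes. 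Summing these $\negl(\lambda)$ terms over the $\mathrm{poly}(\lambda)$ voters gives the claimed bound.

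I expect the delicate part to be exactly this last accounting and its edge cases, rather than any cryptographic step: the verification count in the game ranges over all of $[n]$, so one must check that "verified'' entries contributed by adversarial voters cannot consume the $\theta$ budget and thereby let extra honest voters go unchecked; one must pin down $\mathcal{E}$'s behaviour when a $\vid$ appears on several received paper ballots or also appears in $\mathcal{L}^{\text{rejected}}$ (here the uniqueness assumption on $\vid$s and the honest EC's duplicate handling are what exclude clash‑style ambiguities); and one must fix the normalisation of $d_1$ and the reading of $\mathcal{O}$ for spoiled or blank paper ballots so that "at most $M-d$ manipulated votes'' cleanly yields $d_1\le d$. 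The upshot is that, unlike the cheating‑EC case, the core of the argument is information‑theoretic --- a verifying voter simply sees the wrong plaintext on her own printout --- which is why the client‑verifiability bound is so much tighter than the bound against a cheating EC.
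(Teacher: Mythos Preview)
Your approach is correct, but it is genuinely different from the paper's, and the difference is worth noting.

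The paper re-uses the \emph{same} vote extractor as in the EC-verifiability proof (the one that outputs $\ReceivedVote$ only when the $\vid$ has a unique commitment opening and both PETs pass, and zero otherwise), and then performs a case analysis on the ways a malicious client can deviate: forge a $\PrfKnow$ (negligible by adaptive soundness), print an invalid proof or a non-opening of $(c_a,c_b)$ (caught by the honest EC, so $\vid\notin\mathcal{B}^{\text{accepted}}$ and \textit{VoterVerify} fails), or send a $(\Vote,\Mac)$ pair inconsistent with the paper (caught by the honest PETs, same conclusion). The logic is therefore contrapositive: if the voter's \textit{VoterVerify} succeeded, none of these deviations occurred, hence her vote is correctly included.

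You instead change the extractor to read directly off the received paper ballots and argue forward from the definition $\Result(\tau,\mathcal{O})=\mathcal{O}$: a voter who passed \textit{VoterVerify} has, in particular, read the correct plaintext on $\Paper_1$ and the correct $\vid$ on $\Paper_2$, and since the post and EC are honest that paper is exactly what enters $\mathcal{O}$. Electronic misbehaviour by her client can only move her $\vid$ out of $\mathcal{L}^{\text{tally}}$ and inflate $\varepsilon$, never alter $\mathcal{O}$. This is cleaner for the client-verifiability setting precisely because, as you say, it is essentially information-theoretic: the only cryptographic residue is ruling out a second valid opening of $(c_a,c_b)$, which the paper also needs. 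What the paper's route buys in exchange is that its extractor and its case structure are shared with the EC-verifiability proof, so the two theorems look symmetric; your route trades that symmetry for a shorter argument that makes explicit why the paper channel is doing all the work here.

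Both proofs leave the final counting step---turning ``at most $|\mathcal{V}|-\theta=M-d$ possibly-unverified voters'' into ``$d_1\le d$''---at the same level of informality; you are right to flag that, and in particular the question of whether adversarial voters can occupy slots in the $\theta$ count, as the place where care is needed.
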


\begin{proof}
	We assume that $\theta=|\mathcal{V}|-(M-d)$ voters run Algorithm~\ref{fig:voterVerification}
	(Voter verification) correctly, including checking whether their vote is included in $\mathcal{B}^{\textit{accepted}}$. These are the honest voters, though note that their client may be controlled by the adversary. The vote extractor $\mathcal{E}$ is the same as in the proof of Theorem~\ref{thm:ECVerifiability}.

	
    Consider one honest voter with a possibly-malicious client.
	Consider the following cases, of which the first three are the only ways of making a malformed ballot.
	\VTNote{Check that these first three are the only ways of making a malformed ballot.}
	Note, of course, that none of them prevent the cheating client from also sending other ballots either electronically or by colluding with another voter.
	
	\begin{description}
		\item[Case 1] Suppose the client printed on $\Paper_1$, or sent to the EC in Step~\ref{step:voteSend} of Algorithm~\ref{fig:genAndCasting}, at least one proof $\PrfKnow_{\mathbb{G}, pk}(e_\Param)$, $\PrfKnow_{\mathbb{G}, pk}(e_\Mac)$ or $\PrfKnow_{\mathbb{G}, pk}(e_\Vote)$, that contains a false statement but passes verification. By the adaptive soundness of the proofs, this is done with negligible probability $\eta_1$.
		
		\item[Case 2] Suppose client printed on $\Paper_1$, or sent to the EC in Step~\ref{step:voteSend} of Algorithm~\ref{fig:genAndCasting}, at least one proof $\PrfKnow_{\mathbb{G}, pk}(e_\Param)$, $\PrfKnow_{\mathbb{G}, pk}(e_\Mac)$ or $\PrfKnow_{\mathbb{G}, pk}(e_\Vote)$ that does not pass verification.
		
		These are checked in  Step~\ref{Step:VerifyPfKnow} of \textit{Process vote} (Algorithm~\ref{fig:tallyingEachBallot}).   Since we assume an honest EC, the VoterID will be included in $\mathcal{B}^{\text{rejected}}$, and the vote will not pass voter verification (Algorithm~\ref{fig:voterVerification}).

		\item[Case 3] Suppose the client printed on $\Paper_1$ values $\{a\}_{pk},\{b\}_{pk},\{r_a\}_{pk},\{r_b\}_{pk}$ that are not valid commitment openings of the $c_a$, $c_b$ commitments posted in Step~\ref{Step:VoterCommit} of Algorithm~\ref{fig:genAndCasting} (where $\Paper_1$ here means the one that the honest voter verified, though the dishonest client may have printed other values onto other instances of $\Paper_1$ and fraudulently inserted them into the post).
		In this case, either there will be no commitment opening in 
		Step~\ref{Step:CheckOpen}, or there will be multiple matching VoterIDs in Step~\ref{Step:uniqueCommitOpen}, 
		of \textit{Tally votes} / Algorithm~\ref{fig:tallyingVotes}.  The honest EC will therefore not add VoterID to $\mathcal{B}^{\text{accepted}}$, so the vote will not pass verification.
		
		\item[Case 4] Everything else.  
	\end{description}
	
	To count Case 4, observe that cases 1--3 are the only ways a ballot ($\textit{Paper}_1$) can be malformed. We can hence suppose for this case that the paper ballots are well-formed.  By assumption of postal/EC honesty, both the plaintext $\ReceivedVote$ and the encrypted $\receivedvid$ must be posted properly on the WBB at line~\ref{Step:ECPostsPaperInfo} of Algorithm~\ref{fig:tallyingEachBallot}, because that is what it specifies when the ZKPs are valid. So there is at least one valid commitment opening received and posted.

	We now count the number of $\receivedvid$'s in each possible category for  Step~\ref{Step:FirstWBBMsgs}  of \textit{Tally votes}.

If there are multiple valid  commitment openings for VoterID at Step~\ref{Step:uniqueCommitOpen}, $\vid$ will not appear in $\mathcal{B}^{\textit{accepted}}$.  If there are other matching RecVoterIDs, \textit{GlobalVerify} fails (Step~\ref{Step:VerifyRecVoterIDToWBB}).  

So now assume there is a unique valid commitment opening at Step~\ref{Step:uniqueCommitOpen}
and exactly one matching RecVoterID. By EC honesty, it must match what the voter checked on the paper ballot. If the cheating client did not send true encryptions of the vote and MAC at Step of \textit{Cast}, the plaintext equivalence tests checked at Steps~\ref{Step:ECMACPEP} and~\ref{Step:ECVotePEP} will fail and $\vid$ will not appear in $\mathcal{B}^{\textit{accepted}}$.

Therefore, if the voter runs the verification protocol correctly, either verification fails, or their vote has been  correctly included in the tally, except with negligible probability $\eta_1$.

	  
	We have computed the probability that one client cheated without detection, which implies that the probability that the client of any honest voter cheated without detection is also negligible.  
	Since all but $\theta=|\mathcal{V}|-(M - d)$ voters verified their vote successfully this is also an upper bound on the adversary's success probability.

\end{proof}

\end{document}